\newcommand{\suba}{\mbox{${\cal F}_{sa}$}\xspace}
\newcommand{\old}[1]{{}}
\begin{document}

% Page heads
\markboth{M.~E.~Bender, M.~Farach-Colton, S.~P.~Fekete, J.~Fineman, and S.~Gilbert}{Cost-Oblivious Storage Reallocation}

% Title portion
\title{Cost-Oblivious Storage Reallocation}

\author{
    Michael~A.~Bender
\affil{Stony Brook University}
    Mart{\'i}n~Farach-Colton
\affil{Rutgers University}
    S{\'a}ndor~P.~Fekete
\affil{TU Braunschweig}
    Jeremy~T.~Fineman
\affil{Georgetown University}
    Seth~Gilbert
\affil{National University of Singapore}
}

\begin{abstract}
Databases allocate and free blocks of storage on disk.  Freed blocks
introduce holes where no data is stored.  Allocation systems attempt
to reuse such deallocated regions in order to minimize the footprint
on disk.  When previously allocated blocks cannot be moved, this
problem is called the \defn{memory allocation} problem.  The
competitive ratio for this problem has matching upper and lower bounds
that are logarithmic in the number of requests
and in the ratio of the largest to smallest requests.

This paper defines the \defn{storage reallocation} problem, where
previously allocated blocks can be moved, or \defn{reallocated}, but
at some cost. This cost is determined by the allocation/reallocation
\defn{cost function}.

%The algorithms presented here are \defn{cost oblivious}, in that they
%work for a broad and reasonable class of cost functions, even when
%they do not know what the cost function actually is.

The objective is to minimize the storage footprint, that is, the
largest memory address containing an allocated object, while
simultaneously minimizing the reallocation costs.  This paper gives
asymptotically optimal algorithms for storage reallocation, in which
the storage footprint is at most $(1+\epsilon)$ times optimal, and the
reallocation cost is $O((1/\epsilon) \log(1/\epsilon))$ times
the original allocation cost, that is, it is within a constant factor
of optimal when $\epsilon$ is a constant.
%, which is asymptotically optimal for constant~$\epsilon$.
The algorithms are \defn{cost oblivious}, which
means they achieve these bounds with no knowledge of the
allocation/reallocation cost function,  as long as the cost function
is subadditive.
\end{abstract}

\category{F.2.2}{Analysis of Algorithms and Problem Complexity}{Nonnumerical Algorithms and Problems}[Sequencing and scheduling]

\terms{Algorithms}

\keywords{Reallocation, storage allocation, scheduling, physical layout, cost oblivious.}

\begin{bottomstuff}
A previous extended abstract version appears in the Proceedings of the 
33rd~ACM SIGMOD-SIGACT-SIGART Symposium on Principles of Database Systems, PODS '14~\cite{BenderFaFe16}.
This research was supported in part by NSF grants IIS~1247726, CCF~1217708, CCF~1114809,
CCF~0937822, and CCF~1617618, and  Sandia National Laboratories
(Michael A.\ Bender), NSF grants IIS~1247750 and
CCF~1114930 (Mart{\'i}n Farach-Colton), by DFG grant FE407/17-1 and 17-2, as part of the
Research Group FOR~1800, ``Controlling Concurrent Change" (S{\'a}ndor P.
Fekete), by NSF grant CCF~1218188 (Jeremy T.\ Fineman), by MOE Tier 2 Grant MOE2014-T2-1-157
(Seth Gilbert).

Author's addresses:

Michael~A.~Bender, Department of Computer Science, Stony Brook University,Stony Brook, NY 11794-2424,  USA. bender@cs.stonybrook.edu.
Mart{\'i}n Farach-Colton, Department of Computer Science, Rutgers University, Piscataway, NJ 08854, USA. farach@cs.rutgers.edu.
S\'andor~P.~Fekete, Department of Computer Science, TU Braunschweig, 38106 Braunschweig, Germany.  s.fekete@tu-bs.de.
Jeremy~T.~Fineman, Department of Computer Science, Georgetown University, Washington, DC 20057, USA. jfineman@cs.georgetown.edu.
Seth~Gilbert, Department of Computer Science, National University of Singapore, Singapore 117417, Singapore. seth.gilbert@comp.nus.edu.sg.

\end{bottomstuff}

\maketitle

% !TEX root =  TALG.tex

\section{Introduction}
\seclabel{intro}
\sloppy

Databases, and more generally storage systems, need to allocate and
free blocks of storage on disk.  Freed data introduces holes where no
data is stored.  Allocation systems attempt to reuse such 
deallocated regions in order to minimize the footprint on disk.

The problem of allocating and freeing storage is well studied as the
\defn{memory allocation} problem.  In that formulation, allocated
objects cannot be moved.  The \defn{competitive ratio} is defined to be
the maximum possible ratio of the allocated memory (largest allocated
memory address) to the sum of the sizes of allocated
segments~\cite{Knuth97,LubyNaOr96,NaorOrPe00}.  The lower bound on
the competitive ratio is logarithmic in the number of requests
and in the ratio of the largest to smallest request~\cite{LubyNaOr96}.

The logarithmic lower bound renders traditional memory allocation too
blunt a theoretical tool for understanding storage in many settings.
Furthermore, as we show, this lower bound is a consequence of the
requirement that allocated storage cannot be moved.  But many actual
systems have no such restriction.

\begin{figure}[hbt]
  \begin{center}\includegraphics[scale=.4]{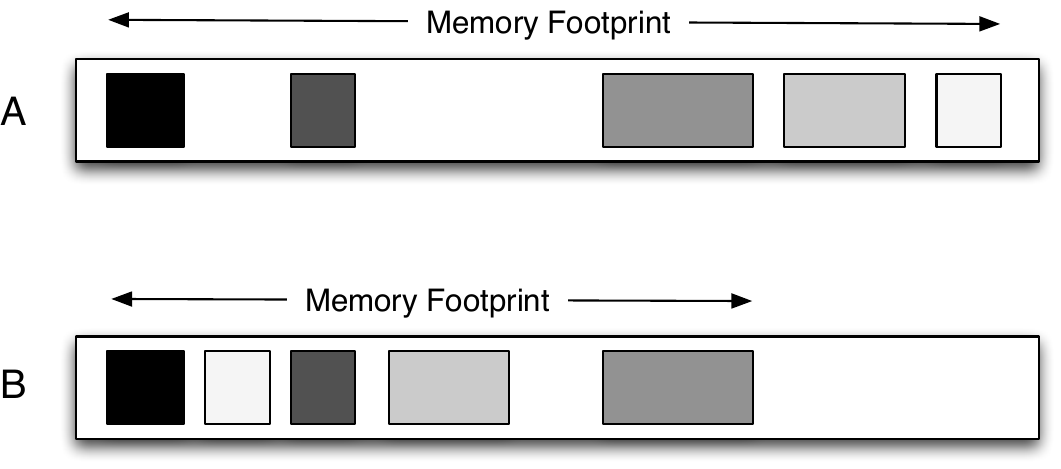}\end{center}
  \caption{Moving previously allocated blocks into holes left by
    deallocations can reduce the footprint of the data in storage.}\figlabel{layout}
\end{figure}

\subheading{Storage reallocation}  This paper generalizes memory
allocation by allowing the allocator to move previously allocated
objects.  We call this generalization \defn{storage reallocation}.
Storage reallocation can take place on any physical medium for allocating
objects, e.g., main memory, rotating disks, or flash memory.  See \figref{layout}.

Thus, garbage 
%collection~\cite{JonesHoMo11,JonesLi96}
collection~\cite{JonesHoMo11}
is a type of in-core storage reallocation.
More generally,  systems that introduce a layer of indirection between
logical addresses and physical addresses, such
as virtual memory,
make reallocation
transparent to processes that  request storage.

Our own interest in memory
reallocation stems from our experience in building the
TokuDB~\cite{TokuDB} and TokuMX~\cite{TokuMX} databases, in which
memory segments are accessed via a so-called ``block translation
layer,'' which translates between the block name, which is immutable,
and the block address in storage, which may change.
{\color{black}{(While TokuDB often reallocates storage, 
its reallocator does not enjoy the extra property of cost-oblivousness 
addressed in this paper.)}} 

%Reallocation is important,
%for example, to shrink files when many records have been deleted from
%a table. 

% Such bulk deletes are a common use case, for example when a
%database is designed to track the most recent transactions, rather
%than to keep historical data.  In such a case, the oldest records are
%regularly deleted.

\subheading{Cost-oblivious storage reallocation} 
An algorithm for storage reallocation must contend with the tradeoff
between storage footprint size and the amount (and cost) of reallocation.  
It should come as no surprise that a storage reallocator that is
designed for main memory is unlikely to work well if the objects are
allocated on a rotating device instead---and vice versa.  This is
because the cost model depends on where the objects are stored.

The question is therefore how to model the cost of reallocating memory
objects.  Faithful cost models are hard to come by, in part because the memory
hierarchy has a hard-to-quantify impact on run time.  In RAM, moving
an object is roughly proportional to the object size.  On disk,
moving a small object may be dominated by the seek time, while moving
a large object may be dominated by the disk bandwidth.  In both cases,
there are cache effects, both in memory and in storage and in their
interaction.  The performance
characteristics for each aspect of memory vary by brand and model.

Rather than model these complex interactions, this paper specifies a class of
cost functions that subsumes them.  We give universal reallocators,
independent of the particulars of the reallocation cost.  We say that
a universal reallocator is \defn{cost oblivious} with respect to
a class of cost functions if its execution does not depend on the
specific choice of 
cost function from the class.  
Our  reallocation algorithms are cost oblivious with respect 
to the class
of cost functions that are
subadditive, monotonically
increasing functions of the object size. (A (monotonically
  increasing) function $f(x)$ is \defn{subadditive}, if $f(x+y)\leq
  f(x) +f(y)$ for any positive $x$ and $y$.  Note that all monotonically
  increasing concave functions are subadditive.)  
  The restriction to subadditivity is not severe. While there exist
  corner cases where a storage system is temporarily superadditive,
  most mechanisms employed by operating systems, such as prefetching
  for latency hiding, rely on the subadditivity of costs. 

To summarize, in  storage reallocation, there is an online sequence of insert
(memory allocation, i.e., function {\em malloc}) and delete (memory release, i.e., function {\em free}) requests.  Objects are allocated to
locations in an arbitrarily large array (address space).  The cost of allocating or 
moving (reallocating) a size-$w$ object is some unknown
(monotonically increasing) subadditive function $f(w)$.

Storage reallocation is thus a bicriteria optimization problem. 
The first objective is to store
objects in an array so that the largest allocated
memory address---which we call the \defn{footprint}---is
approximately minimized.
The second objective is to minimize the 
amortized reallocation cost per new request.  
In this paper, we consider the problem of minimizing the amortized reallocation cost, while using a 
memory footprint that is at most a constant factor larger than optimal.

\subheading{Storage reallocation in a database}  Databases have many
moving parts, and any system that changes the way that storage is
allocated needs to interact gracefully with the other requirements of
the storage system. 

A common constraint in storage (re)allocation is that updates be
\defn{nonoverlapping}, i.e., when an object is
moved, its new location must be disjoint from its old location.  
In databases, object writes are not atomic, so nonoverlapping
reallocation is necessary for durability. This is also relevant in other contexts: In SSDs, the nonoverlapping constraint is enforced by the
hardware, because memory locations must be erased
between writes.
In
FPGAs, satisfying this constraint allows interruption-free reallocations of 
modules~\cite{FeketeKaSc12}.    

The nonoverlapping constraint is only part of the mechanism for
durability.  Another consideration is that when an object is moved,
the translation table between logical and physical addresses needs to be
updated.  It is then written to disk during a \defn{checkpoint}.  Only then
are blocks that have been freed since the last checkpoint available for
reuse.  Therefore, the allocator may not write to a location that has
been freed after the last checkpoint.

Finally, new memory requests arrive at unpredictable times.  It is
undesirable for an allocation request to block on a long sequence of
reallocations, even if the average throughput is high.  A good reallocation 
algorithm should provide some guarantee on the worst-case cost of
individual operations,  
while still maintaining (near) optimal throughput.   

\subheading{Formalization} 
An \defn{online execution} is a sequence of requests of the form
$\ang{\proc{InsertObject}, \id{name}, \id{length}}$ and
$\ang{\proc{DeleteObject}, \id{name}}$.  After each request, the
reallocator outputs an allocation for the objects in the
system.  We say that an object is \defn{active} at time $t$ if it 
has been inserted by one of the first $t$ requests, but not deleted 
by the end of request $t$.  (Note that an object being deleted remains active
until the reallocator completes the delete request.)

If $\cal S$ and $\cal S'$ are the allocations immediately
before and after request $p$, then the \defn{reallocation cost of $p$}
is the sum of the reallocation costs of all objects moved between
$\cal S$ and $\cal S'$.

A reallocator $A$ is \defn{$(f, a, b)$-competitive} for cost function
$f()$, if (1) the footprint size is always optimized to within an
$a$-factor of optimal, and (2) the reallocation cost is at most $b$
times the sum of the allocation costs of every object inserted so
far (including those that have subsequently been deleted).  Since every 
object must be allocated at least once, the cost of such a reallocator is 
within a factor of $b$ of optimal.

Let $\cal C$ be a set of cost functions. A reallocation algorithm $A$
is \defn{cost oblivious} if it does not depend on $f()$. This
  means not only that $f()$ is not a parameter to algorithm $A$, but
  also $A$ learns nothing about $f()$ as $A$ executes. A
cost-oblivious reallocator $A$ is \defn{$({\cal C}, a,
  b)$-competitive} if it is $(f, a, b)$-competitive for every $f \in
{\cal C}$; we abbreviate to \defn{$(a,b)$-competitive} if the set
$\cal C$ is unambiguous.  
%We say that $A$ is \defn{optimally
%cost-oblivious} if it is \defn{$(a,b)$-competitive} for some $a,b
%\in O(1)$.  
In the remainder of this paper, we take ${\cal C}=\suba$, the class of 
monotonically increasing, subadditive functions.

\subheading{Results} Our reallocation algorithms are tunable to
achieve an arbitrarily good competitive ratio $1+\epsilon$ ($0 <
\epsilon \leq 1/2$)
with respect to the footprint size.  
All objects have integral length, and $\Delta$ denotes the
length of the longest object. We establish the following: 

\begin{itemize}
\item We give a cost-oblivious algorithm for storage reallocation that
  is $(\suba,1+\epsilon,O((1/\epsilon)\log(1/\epsilon))$-competitive.
  This allocator is amortized in the sense that it might reallocate
  every existing object between servicing two requests.

\item As a corollary, we give a defragmenter that is cost oblivious
  with respect to $\suba$.  The defragmenter takes as input a
  comparison function, a set of objects having total length $V$ and
  consuming space $(1+\epsilon)V $.  The defragmenter sorts the
  objects using $(1+\epsilon)V +\Delta$ working space, moving each
  object $O((1/\epsilon)\log (1/\epsilon))$ times, {\color{black}{amortized}}. 

\item We extend the storage reallocator to support checkpointing.
  With an additional $O(\Delta)$ space, we guarantee that each
  operation completes within $O(1/\epsilon)$ checkpoints.  
  
\item We also partially deamortize the storage reallocator so that the
  worst-case reallocation cost (and therefore the worst-case time
  blocking for a new size-$w$ allocation) is reduced to
  $O((1/\epsilon)wf(1) + f (\Delta))$.
\end{itemize}

{\color{black}{There is a variety of possible extensions to this concept.
One such direction is to consider the sum of allocation costs;
we address this in a related followup paper~\cite{bff+-corsct-15}.}}

\subheading{Related work} We now  review the related work.

\noindent \emph{Dynamic memory allocation.} There is an extensive
literature on memory
allocation~\cite{Knuth97,Robson71,Robson74,Robson77,LubyNaOr96,NaorOrPe00,Woodall74}
where object reallocation is disallowed.  There are upper and lower
bounds on the competitive ratio of the memory footprint that are
roughly logarithmic in the number of requests and in the ratio of the
largest to smallest request.  These papers generally analyze
traditional strategies such as Best Fit, First Fit, and the Buddy
System~\cite{Knowlton65}, but also propose alternatives.
Traditional memory-allocation strategies often have analogs in
bin-packing~\cite{CoffmanGaJo83,CoffmanJoSh93,CoffmanJoSh97,CoffmanGaJo96,GalambosWo95},
but an enumeration of such results lies beyond the scope of this
paper.

Memory allocation where reallocation \emph{is allowed} appears often in the
literature on garbage collection~\cite{JonesHoMo11}. %,JonesLi96}
There is a long and important line of literature studying dynamic memory
allocation with differing compaction mechanisms, exploring the 
time/space trade-off between the 
 amount of compaction performed 
and the total memory used.  
Ting~\cite{Ting76} develops a mathematical model for examining this trade-off
for different compaction algorithms; B\l{}a\.{z}ewicz et
al.~\cite{BlazewiczNa85} develop a ``partial'' compaction algorithm for
segments of two different sizes that reallocates only a limited number of
segments per compaction. More recently Bendersky and Petrank~\cite{BenderskyPe12} and Cohen and Petrank~\cite{CohenPe13} have more fully
explored the trade-offs inherent in partial compaction.  

These papers on dynamic memory allocation with compaction are
instances of storage reallocation, as addressed in this paper, where
the reallocation cost is (typically) linear: the cost of compaction is
directly proportional to the amount of memory that is moved.  (These
papers often address other problems that arise in garbage collection,
such as how to update pointers to memory that has moved.)  For
example, Bendersky and Petrank~\cite{BenderskyPe12} show that when the
cost function is linear, one can achieve constant amortized
reallocation cost with memory size that is within a constant-factor of
optimal.

% SG: The result of this paragraph did not seem too enlightening, without citations.
%There are some
%areas where research in garbage collection has led to questions
%similar to those addressed in this paper (e.g., can reallocation be
%deamortized to achieve good real-time performance?); however research
%in garbage collection also addresses a variety of questions quite
%distinct from those of this paper (e.g., identifying memory to free,
%integrating with runtime environments, etc.).

In this paper, by contrast, we focus on cost-oblivious algorithms that
tolerate the range of cost functions found in external storage
systems.  Cost obliviousness bears a passing resemblance to similar
notions in the memory hierarchy, particularly the
cache-oblivious/ideal-cache~\cite{FrigoLePr99,Prokop99}, hierarchical
memory~\cite{AggarwalAlCh87}, and
cache-adaptive~\cite{BenderEbFi14,BenderDeEb16} models.  With the
exception of the underlying paging~\cite{SleatorTa85}, work in these
models is about writing algorithms that are memory-hierarchy universal
rather than analyzing resource allocation. {\color{black}{Although we
    consider an online setting, even finding optimal offline
    algorithms seems nontrivial. }}

\noindent \emph{Other related work.}
Storage reallocation has other applications besides
databases.  For example, Fekete et al.~\cite{FeketeKaSc12} address the
storage reallocation problem in the context of FPGAs, and Bender et
al.~\cite{BenderFeKa09} give (not cost-oblivious) algorithms for
constant reallocation cost.

Sparse table data structures
\cite{ItaiKoRo81,Willard82,Willard86,Willard92,Katriel02,BenderDeFa05,ItaiKa07,BenderCoDe02a,BulanekKoSa12,BenderHu07,BenderHu06,BenderBeJo16,BenderBeJo16,BenderFaMo06,BenderFiGi17}
also solve the storage reallocation problem and are easily adapted to
deal with different-sized objects and linear reallocation cost.  But
they do so while maintaining the constraint that the object order does
not change, which makes the problem harder and the reallocation cost
correspondingly larger.

\noindent \emph{Scheduling/planning interpretation.}  The storage reallocation
problem can be viewed as a reallocation problem in
scheduling/planning.  In this interpretation, we have an online
sequence of requests to insert a new job $j$ into the schedule or to
delete an exiting job $j$. Each job has a length $w_j$ and the
rescheduling cost is $f(w_j)$.  The goal is to maintain a uniprocessor
schedule that (approximately) minimizes the makespan (latest
completion time of any job), while simultaneously guaranteeing the
overall reallocation cost is approximately minimized.  We can
abbreviate this scheduling problem as $1| f(w) \mbox{\it ~realloc
}|C_{\rm max}$, generalizing standard scheduling
notation~\cite{GrahamLaLe79}.  The goal is actually not to \emph{run}
the schedule, but rather to plan a schedule subject to an online
sequence of changes to the scheduling instance.

We thus  review related work in scheduling and combinatorial
optimization.  Several papers explore related notions of scheduling
reallocation (although to the best of our knowledge, not
cost-universal scheduling reallocation).  Bender et
al.~\cite{BenderFaFe13} study reallocation scheduling with unit-length
jobs having release times and deadlines.  Their reallocator maintains
a feasible multiprocessor schedule while servicing inserts and
deletes.

In the area of robust optimization, the goal is to develop solutions for combinatorial optimization problems that are (near) optimal, and that can be readily updated if the instance changes.  In this context, many papers have looked at the problem of minimizing reallocation costs for specific optimization problems (e.g.,~\cite{HallPo04,UnalUzKi97,FeketeKaSc12}).  For example, 
Davis et al.~\cite{DavisEdIm06} study a reallocation problem, where an
allocator divides resources among a set of users, updating the
allocation as the users' constraints change.  The goal is to minimize
the number of changes to the allocation.  
%
% Other papers that solve
% specific instances of reallocation problems include~.  
%
As another example, Sanders et al.~\cite{SandersSiSk09} look at the problem of assigning jobs to processors, minimizing the reallocation as new jobs arrive.   Jansen et al.~\cite{JansenKl13} look at robust algorithms for online bin packing that minimize migration costs.  See Verschae~\cite{Verschae12} for more details on robust optimization.

Shachnai et al.~\cite{ShachnaiTaTa12} explore a slightly different notion of reallocation
for combinatorial problems.  Given an input, an optimal solution
for that input, and a modified version of the input, they develop
algorithms that find the minimum-cost modification of the optimal
solution to the modified input.  A difference between their setting
and ours is that we measure the ratio of reallocation cost to
allocation cost, whereas they measure the ratio of the actual
transition cost to the optimal transition cost resulting in a good
solution.  Also, we focus on a sequence of changes, which means we
amortize the expensive changes against a sequence of updates.

There also exist reoptimization problems, which address the
goal of minimizing the computational cost for incrementally updating
the
schedule~\cite{AusielloBoEs11,ArchettiBeSp10,AusielloEsMo09,ArchettiBeSp03,BockenhauerFoHr06}.
By contrast, in reallocation, we focus on the cost of reallocating
resources rather than the computational cost of generating the
allocation.

\newcommand{\mmax}{\Delta}

\section{\sloppy Footprint Minimization}
\seclabel{simplealloc}

In this section we give a cost-oblivious algorithm for footprint
minimization in storage reallocation.  The footprint always has size
at most $(1+\epsilon) V_{t}$, where $V_{t}$ denotes the
\defn{volume}, or total size, of all allocated objects at time $t$,
i.e., of the active objects after the $t$th operation completes.  A
size-$w$ object has an amortized reallocation cost of
$O\left(f(w)\cdot(1/\epsilon)\log(1/\epsilon)\right)$, where $f(w)$ is
the (unknown) cost for allocating an object of size $w$.

\begin{theorem}
  For any constant $\epsilon$ with $0 < \epsilon \leq 1/2$, there
  exists a cost-oblivious
  storage-reallocation algorithm that is $\left(1+\epsilon,\,
    O\left((1/\epsilon)\log(1/\epsilon)\right)
  \right)$-competitive with respect to \suba, the class of monotonically increasing, subadditive cost functions.
  \thmlabel{makespan}
\end{theorem}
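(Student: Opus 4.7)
My plan is to implement a size-class strategy with geometric doubling within each class, tuned via a parameter $\delta = \Theta(\epsilon/\log(1/\epsilon))$ that trades footprint slack against reallocation frequency. I would partition objects into size classes, putting each object of size $w \in [(1+\delta)^k, (1+\delta)^{k+1})$ into class $k$, and maintain the active objects of each nonempty class in a contiguous packed region of capacity $(1+\delta) V_k$, where $V_k$ is the class's current volume. The class regions would sit consecutively in the address space.

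Within each class, I would use the textbook $(1+\delta)$-doubling protocol: on overflow, allocate a new region of capacity $(1+\delta)$ times larger, move all class-$k$ objects into it, and discard the old region; shrink symmetrically when utilization drops below $1/(1+\delta)^2$. This gives $O(1/\delta) = O((1/\epsilon)\log(1/\epsilon))$ amortized within-class moves per object, each of cost at most $f(w)$ for a size-$w$ object, so the within-class contribution already matches the target reallocation bound. The footprint bound is immediate as well: the sum of class-region capacities is at most $(1+\delta) V_t \leq (1+\epsilon) V_t$ once $\delta \leq \epsilon$.

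What remains---and is where I expect the main difficulty---is bounding the cross-class shift cost: when class $k$'s region grows or shrinks, every class placed after it in the address space has to shift. My plan is to order classes from largest objects to smallest (largest $k$ at the lowest addresses), so that a class-$k$ resize shifts only classes of strictly smaller objects, and then to charge each shift to the block of class-$k$ insertions that triggered the resize. Such a block has combined volume $\Theta(\delta V_k)$, so by subadditivity its total allocation cost is at least $f(\delta V_k)$; the goal is to argue that this dominates the shift cost of the displaced smaller classes up to the claimed $O((1/\epsilon)\log(1/\epsilon))$ factor. The bookkeeping is delicate because subadditivity only gives $\sum_j f(w_j) \geq f(\sum_j w_j)$, i.e.\ a lower bound on a sum of costs rather than the upper bound we want for the shifted objects; I would therefore carry the analysis via a per-object potential/charging argument that uses the geometric nesting of size classes to keep the total move count per object bounded, rather than trying to upper-bound $\sum_j f(w_j)$ in aggregate. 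As a safety net, if the single-line layout proves too lossy, a fallback is to place each class in its own doubling arena in a separate address range: this costs only a small additional constant in the footprint but eliminates cross-class shifts entirely, so the $O((1/\epsilon)\log(1/\epsilon))$ bound then follows from the within-class analysis alone.
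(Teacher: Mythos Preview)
Your proposal has a genuine gap in the cross-class shift analysis, and the fallback does not rescue it.

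\textbf{The fallback breaks the footprint bound.} In this problem the footprint is the largest occupied address; placing each size class in its own ``separate address range'' introduces gaps between arenas that count toward the footprint. Without a mechanism to keep those arenas tightly packed against one another, the $(1+\epsilon)V$ bound fails. So the fallback is not available, and the cross-class shift cost must actually be controlled.

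\textbf{The largest-first ordering is the wrong direction for subadditive costs.} With large classes at low addresses, an operation on a single large object can force every smaller object to shift. Concretely: insert $N$ size-$1$ objects, then repeatedly insert and delete one size-$N$ object. Each insertion of the big object creates (or re-creates) its class region at the front and shifts all $N$ unit objects by $\Theta(N)$ cells; each deletion shrinks that region to zero and shifts them back. With the constant cost function $f\equiv 1$ (which is subadditive), each big-object operation costs $\Theta(N)$ in shifts but has allocation cost $1$, so the reallocation-to-allocation ratio is $\Theta(N)$, not $O((1/\epsilon)\log(1/\epsilon))$. Your proposed per-object potential cannot fix this: the move count of each size-$1$ object is unbounded in this sequence. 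The difficulty you flagged---that subadditivity only lower-bounds $\sum_j f(w_j)$---is exactly the obstruction: you need to \emph{upper}-bound the cost of moving many small objects by the allocation cost of a few large ones, and subadditivity runs the wrong way for that.

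\textbf{How the paper handles this.} The paper orders size classes \emph{smallest first} and, crucially, does not shift downstream classes on every resize. Instead, each size class $i$ carries a buffer of capacity $\lfloor \epsilon' V(i)\rfloor$; a new class-$i$ object goes into the earliest buffer $j\geq i$ with room, so small objects may spill into buffers of larger classes but never vice versa. A flush is triggered only when all downstream buffers are full, and it rebuilds a suffix of size classes $i\geq b$. The charging then works because the buffered objects that trigger the flush belong to classes $\leq i$ and hence have \emph{higher} per-unit cost (since $f(x)/x$ is nonincreasing for subadditive $f$); they pay the $O(1/\epsilon')$ factor for moving the payload of class $i$. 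The extra $\log(1/\epsilon')$ factor appears in a separate ``underfull buffer'' case: a class-$i'$ object can skip over at most $O(\log(1/\epsilon'))$ larger classes whose buffers are too small to hold it, and each such skipped class has volume $O(2^{i'}/\epsilon')$, so the charge is $O((1/\epsilon')\log(1/\epsilon'))$. Your choice $\delta=\Theta(\epsilon/\log(1/\epsilon))$ does not correspond to any step of this argument; the paper uses power-of-two size classes with $\epsilon'=\Theta(\epsilon)$ throughout.
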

Thus, the storage reallocation algorithm 
is within a constant factor of optimal for any constant $\epsilon<1/2$.

\subheading{Intuition and cost-function-specific algorithms}
We begin by considering some simple cases where the cost function is
known in advance.  First suppose that the reallocation cost is linear
in the object size, i.e., $f(w)=w$.  A simple logging-and-compressing
strategy attains a $(2,2)$-competitive algorithm for linear cost
functions.  Specifically, allocate objects from left to right.  Upon a
deletion, leave a hole where the object used to be. Whenever a
deallocation causes the footprint to reach $2V_{t}$, remove all holes
by compacting. The cost to reallocate the entire volume $V_t$ is paid for
by the $V_t$'s worth of elements that were deallocated since the last
compaction.

Logging and compressing does not work well for constant reallocation
cost, i.e., $f(w)=1$.  To see why, suppose the deleted objects have
size $\Delta$, and the reallocated elements have size~$1$.  We may
need to spend amortized $\Theta(\Delta)$ reallocation cost per deletion.

There do exist good reallocators for constant 
reallocation cost~\cite{BenderFeKa09}. 
Conceptually, round the object sizes up 
to the next power of $2$ to form \defn{size classes}, 
where 
objects have  size 
$2^{i}$ for $i=0,\ldots,\log{\Delta}$. 
Now group the objects by increasing size. 
Between the $i$th and $(i+1)$st size class, there 
is either a gap of size  $2^{i}$ or 
no gap. To insert an object of size $2^{i}$, 
put the object into the  gap after the $i$th size class, if one exists, 
or displace a larger object to make space, otherwise. 
Then recursively reinsert the larger object. 
The amortized reallocation cost is $O(1)$, 
because the costs per unit volume 
to displace the recursively larger objects form a geometric series.  

It can be shown, however, that with linear reallocation cost this 
strategy is only $(2,{\color{black}{\Theta}}(\log \Delta))$-competitive.

This section gives a single algorithm that works for $f(w)=w$,
$f(w)=1$, and all other subadditive cost functions.  The algorithm
keeps the objects partially sorted by size.  Since the cost function
is subadditive, small objects are the most expensive to move per unit
size.  We therefore want to guarantee that when an object is inserted
or deleted, it can only trigger the movement of larger (less expensive
per unit size) objects.  Specifically, small objects with total volume
$W$ will be able to cause the movements of big objects with
total volume $O(W)$, but not the other way around.  At the same time,
we need to avoid cascading reinserts, which can happen with the 
algorithm for unit cost described above.

\begin{figure}
  \begin{center}\includegraphics[scale=.4]{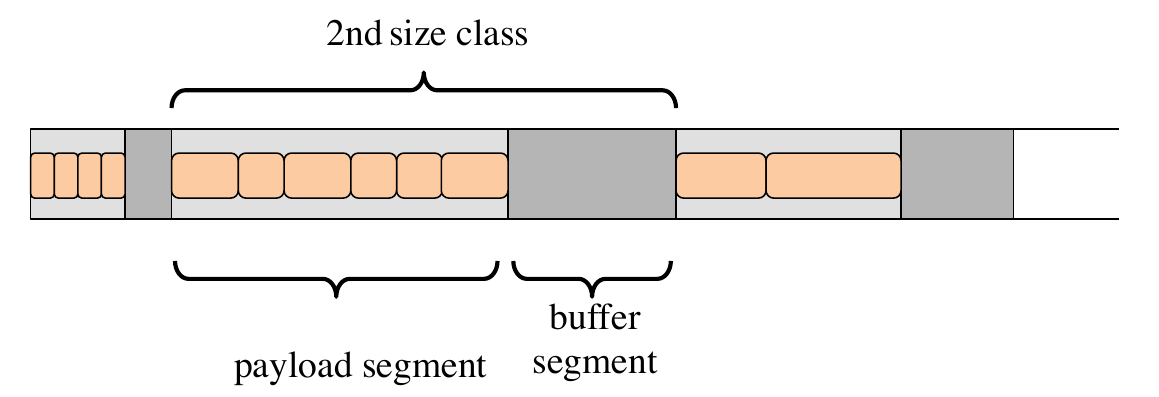}\end{center}
  \caption{The layout of the data structure when the buffer segments
    are empty, with $\epsilon' = 1/2$. The light-gray are the
    payload segments, and the dark-gray areas are the buffer
    segments.  The orange rectangles are objects currently in the data
    structure.}\figlabel{layout-class}
\end{figure}

\subheading{Overview and invariants}
Objects are categorized into \defn{size classes}; 
the~$i$th size class contains
objects of size $w$, where $2^{i-1} \leq w < 2^i$.
Thus, there are 
$\floor{\log\mmax}+1$ \defn{size classes}. 
The value
of $\Delta$ need not be known in advance.    For size class~$i$,
$V_t(i)$ denotes the \defn{volume} (total size) of all objects active
at time $t$ in size class $i$.  If $t$ is understood, we use $V(i)$.

{\color{black}{Intuitively, these size classes allow us to order objects
    by approximate size, which helps make efficient deletes possible.
    To maintain our target makespan, we need to reallocate an object
    when too many objects to its left are deleted. If objects to the
    right are small and objects to the left are large, then
    reallocations are too expensive for most cost functions. Within a
    size class, the ordering does not matter, since it only affects
    the reallocation cost by a constant factor.
    We next explain that, in fact, we can even further relax our ordering. 
 }}

The array (address space) is divided into $\floor{\log\mmax}+1$
regions, as illustrated in \figref{layout-class}.  The $i$th region is
dedicated to the $i$th size class and comprises two subregions, a
\defn{payload segment} followed by a \defn{buffer segment}.  The $i$th
payload segment contains only objects belonging to the $i$th size
class, whereas the $i$th buffer segment may contain objects that are
in the $i$th size class or smaller size classes.

Whenever (potentially large) reallocations are taking
place, an \defn{overflow segment} is used 
for temporarily
rearranging the objects, as described later.
The overflow segment is placed at  the end of the array.

\begin{invariant}
The following properties are maintained throughout the 
execution of the algorithm: 
\begin{enumerate}
\item The $i$th region ($i=1,\ldots,\floor{\log\mmax}+1$) 
comprises the $i$th payload and $i$th buffer segment. 
\item The $(\floor{\log\mmax}+2)$nd region, the overflow segment, 
stores elements temporarily during reallocation. 
\item The $i$th payload segment only stores elements 
from the $i$th size class.
\item The $i$th buffer segment only stores elements 
from size classes $\ell\leq i$. 
\end{enumerate}
\end{invariant}

\subheading{Allocating and deallocating}
When a new size-$w$ object that belongs to a size class $i$ is
allocated, it is stored at the end of the earliest buffer $j \geq i$
that has sufficient unoccupied space.  (Recall that this object cannot 
be inserted into any buffer in a segment less than $i$.)

When there is not enough available space in any of these
buffers, a \defn{buffer flush} operation is triggered (see \figref{flush}), after which
the object is inserted. 
During a buffer flush, all objects in some suffix
of buffers get moved to their proper payload segments and the segment and region boundaries get redefined. 

If the new size-$w$ object belongs to a larger size class than any
other active object, then we instead create a new payload segment and
buffer segment for the new size class located immediately after the
last size class's segment, increasing the total space used by at most
an additive $w + \epsilon' w$, for some constant $\epsilon'$.  (The
overflow segment is empty, because it is only used during a buffer flush,
and hence  implicitly resides after the new size class.)

When a size-$w$ object is deleted, it leaves a hole until the
next buffer flush occurs.  A dummy deletion request is added to the
buffer and forced to consume $w$ space.  This buffered dummy request is not
freed until the next buffer flush.  Since both inserting and deleting a job
of size $w$ reduces the space in the buffer by $w$, we can analyze
insertions and deletions together.

\begin{invariant}
The overflow segment is empty except during buffer flush operations. 
\end{invariant}

\begin{invariant}
When a flush of the $i$th buffer segment occurs at time
$t$, the object and segment boundaries move so that:
\begin{enumerate}
\item  the space occupied by the $i$th payload segment after
the buffer flush completes is exactly $V_t(i)$, and 
\item the space occupied by the $i$th
buffer segment after the buffer flush completes is $\floor{\epsilon' V_t(i)}$, for $\epsilon' =
\Theta(\epsilon)$. 
\end{enumerate}
Immediately following this
flush, the size-$\floor{\epsilon' V_t(i)}$ buffer contains no
objects.
\end{invariant}

{\color{black}{
As described in this
  section, $V_t(i)$ immediately increases to count the new object, but
  the object is not yet placed in the array.  Next, the flush occurs,
  and finally the new object is placed in the array.  Our extension in
  \secref{db} places the object before performing the flush; this
  extension requires an additive $\Delta$ working space during the
  flush procedure. 
}}

\begin{figure}
  \begin{center}\hspace{-1mm}\includegraphics[scale=.385]{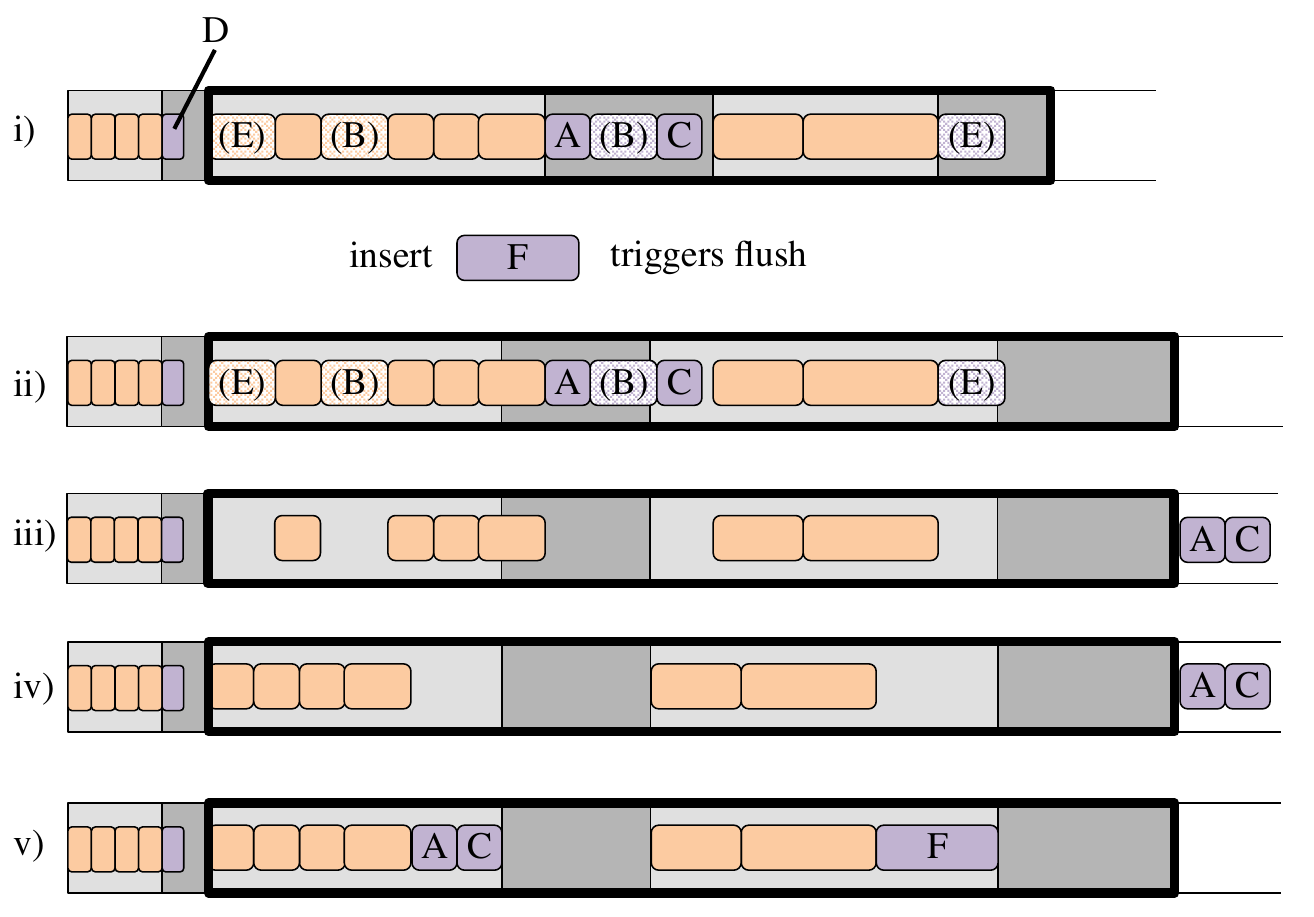}\end{center}
  \caption{Example of a flush, starting from \figref{layout-class}.  The
    lavender rectangles are updates to the data structure, with
    parentheses and light shading denoting a delete or delete
    record. (i) The state after insert A, delete B, insert C, insert
    D, and delete E in that order. (ii)--(v) show a flush that occurs
    when inserting F.  The heavy outline shows the region affected by
    the flush, i.e., size classes 2 and 3. (ii) The new boundaries for
    the 2nd and 3rd size class.  (iii) The state after moving buffered
    objects out of the way and dropping deleted objects.  (iv) The
    state after rearranging the payload segments. (v) The state after
    putting all buffered objects to their proper locations.  Observe
    that for the flushed classes, the buffers are now
    empty.}\figlabel{flush}
\end{figure}

\subheading{Buffer flush} A buffer flush updates the segment
boundaries in a suffix of regions, moving all objects to their proper
payload segments, and leaving all buffer segments empty to accommodate
future insertions.

To execute a buffer flush, first determine the \defn{boundary size
class} $b$ and then flush all buffers for size classes $i \geq b$.
The value $b$ is defined as the maximum value such that all objects in
buffers $i\geq b$ and the object being inserted/deleted belong to size
classes at least $b$.  To determine $b$, iterate from the largest to the
smallest region, examining every object in the region's buffer.  If
any object belongs to a size class $s < b$, then update $b$ with the
size class $s$.  This continues until reaching a size class $b$, where
no object from a smaller size class has been encountered. 

To flush the size classes $i\geq b$ at time $t$, first calculate $V_t(i)$
for all $i \geq b$.  The goal is to redistribute these size classes to
take space at most $S = (1+\epsilon')\sum_{i\geq b}V_t(i)$, i.e.,
space $V_t(i)$ for the $i$th payload segment and $\floor{\epsilon'
  V_t(i)}$ for the $i$th buffer, while moving all objects from buffers
into payload segments.

A flush can be implemented to include at most two moves per object in
the flushed size classes.  
\begin{enumerate}
\item 
First, identify the new array suffix of size
$S$ to accommodate payload and buffer segments.  Temporarily move all
objects from buffer segments to empty space immediately after this
suffix (or after the current suffix, if the current suffix is longer due to deletes),
removing any dummy delete records from buffers. 
These objects make up the overflow segment.
This first step
increases space usage by at most $\sum_{i\geq b} \epsilon' V_t(i)$.
\item 
Next, iterate over payload segments from smallest to largest, moving
objects as early as possible, thus removing any gaps left by deleted
objects or emptied buffers.  At the end of this step, all the objects
are packed as far left as possible with no gaps, beginning at the
start of region $b$.  
\item Then, iterate over payload segments from largest
to smallest, moving each object to its final destination in the
redistributed array (which is no earlier than its current location).
The final destination can be determined by looking at the values
$\set{V_t(i)}$; this step reintroduces gaps to accommodate any
not-yet-placed  objects in the overflow segment 
and the empty size-$\floor{\epsilon'  V_t(i)}$ buffers.  
\item Finally, iterate over all objects in the overflow segment,
placing them in their final destinations at the end of the
appropriate payload segments.
\end{enumerate}

\subheading{Analysis}
The proof of \thmref{makespan} follows from
\lemreftwo{makespan-space}{makespan-cost} given below, by fixing
$\epsilon' = \Theta(\epsilon)$ appropriately.  \lemref{makespan-space}
states that the space used is $1+O(\epsilon')$ times the optimal space
usage. 
 \lemref{makespan-cost} states that the reallocation cost is no
worse than $O((1/\epsilon')\log(1/\epsilon'))$ times the optimal
reallocation cost.  

\newcommand{\makespanSpaceStatement}{After 
processing the first $t$ allocation/deallocation requests, 
the space used by the storage-reallocation algorithm is
$(1+O(\epsilon'))V$, where $V = \sum_i V_t(i)$.}

\begin{lemma}\lemlabel{makespan-space}
\makespanSpaceStatement
\end{lemma}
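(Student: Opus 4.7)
The plan is to bound the footprint at time $t$ by decomposing it into per-region contributions, relating each region's current segment sizes to $V_t(i)$, and summing. By Invariant 2 the overflow segment is empty once the current request has completed, so the footprint equals $\sum_i (P_i+B_i)$, where $P_i$ and $B_i$ are the payload and buffer sizes for class $i$ at time $t$. Let $t^*_i \leq t$ denote the most recent time class $i$ was flushed (or created, if no flush has yet involved it). By Invariant 3, at time $t^*_i$ we have $P_i = V_{t^*_i}(i)$ and $B_i = \floor{\epsilon' V_{t^*_i}(i)} \leq \epsilon' V_{t^*_i}(i)$, and these values do not change until the next flush of class $i$. Hence the footprint is at most $(1+\epsilon')\sum_i V_{t^*_i}(i)$, and the remaining task is to show $\sum_i V_{t^*_i}(i) \leq (1+O(\epsilon'))V_t$.

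For this I would prove the per-class bound $V_{t^*_i}(i) \leq (1+O(\epsilon'))V_t(i)$. The inequality is vacuous when $V_t(i) \geq V_{t^*_i}(i)$, so consider the case of a net decrease in $V(i)$ since $t^*_i$. Any such decrease is caused by class-$i$ deletions, each of which (by the algorithm description) places a dummy record of equal size into buffer $i$ that persists until the next flush involving class $i$. These dummies, together with any class-$\leq i$ real objects currently sitting in buffer $i$, cannot exceed the buffer capacity $\floor{\epsilon' V_{t^*_i}(i)}$; if they did, the next insertion would have already triggered a flush with boundary class $\leq i$, contradicting the definition of $t^*_i$. So the cumulative deleted volume of class $i$ since $t^*_i$ is at most $\epsilon' V_{t^*_i}(i)$, giving $V_t(i) \geq (1-\epsilon')V_{t^*_i}(i)$, which for $\epsilon' \leq 1/2$ rearranges to $V_{t^*_i}(i) \leq (1+2\epsilon')V_t(i)$. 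Summing over $i$ and multiplying by $(1+\epsilon')$ then yields footprint $\leq (1+\epsilon')(1+2\epsilon')V_t = (1+O(\epsilon'))V_t$, as desired.

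The hard part will be the careful bookkeeping around dummy records: one must verify that a deletion of a class-$i$ object always consumes class-$i$ buffer capacity (regardless of whether the object physically resided in payload $i$, in buffer $i$, or in a larger buffer where it had originally overflowed), and that the flush-triggering condition is exactly what makes the capacity argument above go through. A secondary subtlety is that flushes of classes $j>i$ may occur between $t^*_i$ and $t$; since such flushes touch only regions $\geq b > i$, they leave buffer $i$ and payload $i$ intact and thus do not disturb the per-class accounting. If the dummy-charging convention turns out to differ from the per-class version above, a clean fallback is to aggregate: bound total buffer capacity $\sum_i B_i \leq \epsilon'\sum_i V_{t^*_i}(i)$ against total decrease $\sum_i (V_{t^*_i}(i)-V_t(i))_+$, and solve the resulting linear inequality for $\sum_i V_{t^*_i}(i)$ in terms of $V_t$; this yields the same $(1+O(\epsilon'))V_t$ bound.
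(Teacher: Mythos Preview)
Your per-class bound $V_{t^*_i}(i) \leq (1+2\epsilon')V_t(i)$ does not hold in general, and the gap is precisely the ``hard part'' you flagged. The algorithm places a dummy delete record by the same rule as an insertion: it goes into the earliest buffer $j \geq i$ with room, not necessarily into buffer~$i$. Moreover, a full buffer~$i$ does not trigger a flush; it merely causes subsequent class-$\leq i$ records to spill into buffers $i+1, i+2, \ldots$. A flush is triggered only when \emph{no} buffer $j \geq i$ can accommodate the new record. Consequently one can delete far more than $\epsilon' V_{t^*_i}(i)$ volume from class~$i$---the corresponding dummy records living in large later buffers---without class~$i$ being reflushed. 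For a concrete picture: fill buffer~$i$ with overflowed class-$(i-1)$ inserts, then perform many class-$i$ deletions whose dummies land in a much larger buffer~$i+1$; now $V_t(i)$ can be an arbitrarily small fraction of $V_{t^*_i}(i)$.

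Your aggregate fallback, by contrast, is exactly the paper's argument and is correct. The total buffer capacity is at most $\epsilon'\sum_i V_{t^*_i}(i)$; every class-$i$ delete since $t^*_i$ still has a live dummy record \emph{somewhere} in the buffers (any flush touching that record would have boundary $\leq i$ and hence would reset $t^*_i$); therefore $\sum_i V_{t^*_i}(i) - V_t \leq \epsilon'\sum_i V_{t^*_i}(i)$, giving $\sum_i V_{t^*_i}(i) \leq V_t/(1-\epsilon') = (1+O(\epsilon'))V_t$ for $\epsilon' \leq 1/2$. Combined with the footprint bound $(1+\epsilon')\sum_i V_{t^*_i}(i)$ this yields the lemma. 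Promote the fallback to be the main argument and drop the per-class attempt.
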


\begin{proof}
  Let $f_i \leq t$ be the previous time the $i$th buffer was flushed.
  The space used by the buffers and payload segments is at most
  $(1+\epsilon')\sum_i V_{f_i}(i)$ by construction, and it may grow to
  $(1+O(\epsilon'))\sum_i V_{f_i}(i)$ during the present buffer flush.
  
  To prove the lemma, we need only bound the difference between 
 $\sum_i V_{f_i}(i)$ and  $\sum_i V_t(i)$.  The difference is accounted for by those objects in buffers
  (including delete records), which amount to at most an
  $\epsilon'\sum_iV_{f_i}(i)$ total volume of objects.  Thus, we have
  $\abs{\sum_i V_t(i)-\sum_i V_{f_i}(i)} \leq \epsilon'\sum_i
  V_{f_i}(i)$.
  
  The worst-case-ratio overhead occurs when all buffered objects are
  deletions, in which case  $\sum_i V_t(i) \geq
  (1-\epsilon')\sum_i V_{f_i}(i)$.
  
  Thus, at most $(1+O(\epsilon'))\sum_i V_{f_i}(i)$ space
  stores at least $(1-\epsilon')\sum_i V_{f_i}(i)$ active objects.
  Observing that $(1+O(\epsilon'))/(1-\epsilon') = 1+O(\epsilon')$ for
  $\epsilon' \leq 1/2$ completes the proof.
\end{proof}

\begin{lemma}\lemlabel{makespan-cost}
  For monotonically increasing, subadditive cost functions $f$, 
  the amortized cost of inserting
  or deleting an object of size $w$ is
  $O\left(f(w)\cdot(1/\epsilon)\log(1/\epsilon)\right)$.
\end{lemma}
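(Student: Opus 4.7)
The plan is an amortized analysis via a potential function. Let $g(i) := f(2^i)/2^i$; subadditivity of $f$ gives $f(2^i) \le 2 f(2^{i-1})$, so $g$ is non-increasing, meaning small size classes are most expensive per unit volume. I would take $\Phi = \sum_i \alpha_i u_i$, where $u_i$ is the current used space in buffer~$i$ and $\alpha_i = \Theta\bigl((1/\epsilon')\log(1/\epsilon')\,g(i)\bigr)$ with $\epsilon' = \Theta(\epsilon)$.

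First I would bound the actual cost of a single flush with boundary~$b$. Each object in class $i$ has size in $[2^{i-1},2^i)$ and cost at most $f(2^i) \le 2 f(2^{i-1})$; there are at most $V_t(i)/2^{i-1}$ such objects; and the four-step flush procedure moves each object only $O(1)$ times. Summing over $i \ge b$ gives a flush cost of $O\bigl(\sum_{i\ge b} V_t(i)\,g(i-1)\bigr)$. The insert/delete side of the potential argument is then immediate: an operation of size~$w$ at class~$\ell$ adds $w$ to $u_j$ for a single buffer $j \ge \ell$, so $\Delta\Phi = \alpha_j w \le \alpha_\ell w$ (using that $g$ is non-increasing), and the identity $g(\ell) w = w f(2^\ell)/2^\ell \le 2 f(w)$ (subadditivity plus $w \ge 2^{\ell-1}$) yields an amortized per-operation cost of $O\bigl(f(w)(1/\epsilon)\log(1/\epsilon)\bigr)$, matching the target bound.

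The hard part will be showing that every flush has non-positive amortized cost, i.e., that the potential drop $\sum_{i\ge b}\alpha_i u_i$ dominates the flush cost $O\bigl(\sum_{i\ge b} V_t(i)\,g(i-1)\bigr)$. For buffers $i \ge \ell$ (where $\ell \ge b$ is the triggering class), the flush-trigger condition forces $u_i$ to within $w$ of capacity $\epsilon' V_t(i)$, which together with $g(i-1)\le 2 g(i)$ gives a clean per-level payoff with $\log(1/\epsilon')$ slack. The delicate case is buffers $b \le i < \ell$, which may be only partially filled. Here I would use a causality argument: because the boundary is as low as~$b$, some class-$b$ object currently resides in a buffer $j \ge b$; for that spill to have occurred, buffers $b,\ldots,j-1$ were full at the moment of the spill, and no flush with boundary $\le j$ can have happened since (otherwise that class-$b$ object would already have migrated to payload~$b$). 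Since buffers only accumulate content between their flushes, those intermediate buffers remain essentially full now, paying for their levels' flush cost. The $\log(1/\epsilon')$ slack in $\alpha_i$ absorbs the few remaining partially-full levels, bounded in number by $O(\log(1/\epsilon))$, and is the ultimate source of the $\log(1/\epsilon)$ factor in the lemma.

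Finally, summing amortized costs across the execution and using $\Phi \ge 0$ throughout, the total reallocation cost is at most the total amortized insert/delete cost, yielding the claimed bound of $O\bigl(f(w)(1/\epsilon)\log(1/\epsilon)\bigr)$ per operation.
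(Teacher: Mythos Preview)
Your potential function is the wrong one, and the gap shows up exactly in the step you flag as ``hard.'' By setting $\Phi = \sum_i \alpha_i u_i$ with $\alpha_i$ depending on the \emph{buffer} index $i$, you assign an object of class $\ell$ that spills into buffer $j > \ell$ a credit of only $\alpha_j w$, not $\alpha_\ell w$. Your per-insert bound $\Delta\Phi \le \alpha_\ell w$ is correct as an upper bound, but the actual potential you bank is the smaller $\alpha_j w$, and that is what gets released on a flush. When $g(j) \ll g(\ell)$ this credit is far too small to pay either for the underfull levels the object skipped or even for moving the object itself.

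Concretely, take $f\equiv 1$ (so $g(i)=2^{-i}$), let classes $1,\ldots,K$ each hold a single object ($V(i)=2^{i-1}$), and pick $K \approx 2\lg(1/\epsilon')$. Buffers $1,\ldots,\lceil\lg(1/\epsilon')\rceil$ then have capacity $0$; the remaining $\approx \lg(1/\epsilon')$ buffers together hold $\Theta(1/\epsilon')$ size-$1$ objects before a flush with boundary $b=1$ is triggered. The flush moves $\Theta(1/\epsilon')$ objects, so its cost is $\Theta(1/\epsilon')$. But each nonempty buffer $i$ contributes $\alpha_i u_i = \Theta\bigl((1/\epsilon')\lg(1/\epsilon')\cdot 2^{-i}\cdot \epsilon' 2^{i-1}\bigr) = \Theta(\lg(1/\epsilon'))$ to the potential drop, and summing over $\Theta(\lg(1/\epsilon'))$ such buffers gives only $\Theta\bigl((\lg(1/\epsilon'))^2\bigr)$. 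So the flush does \emph{not} have nonpositive amortized cost, and the ``$\lg(1/\epsilon')$ slack'' in the full buffers cannot absorb the deficit.

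The paper avoids this by a direct charging argument that is equivalent to the potential $\Phi = \beta \sum_{o\ \text{buffered}} f(|o|)$ with $\beta = \Theta((1/\epsilon')\lg(1/\epsilon'))$, i.e., crediting each buffered object according to its \emph{own} cost. Full buffers (Case~1) then release enough credit to reallocate their payload, and each spilling object (Case~2) individually carries enough credit---$\Theta((1/\epsilon')\lg(1/\epsilon'))f(w)$---to pay for the at most $O(\lg(1/\epsilon'))$ underfull levels it skipped, each of which has volume $O(2^{i'}/\epsilon')$ by the skip condition. Your causality argument is essentially the reasoning behind Case~2, but it needs to be paired with the right potential (or equivalently, with charging the skipping object at its own rate rather than the rate of the buffer it lands in).
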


\begin{proof}
  Consider a buffer-flush operation, and let $b$ be the boundary size
  class (i.e., all size classes $i\geq b$ have their buffers flushed).
  There are two cases:

  Case 1: The $i$th buffer contains $\Omega(\epsilon' V(i))$ volume of
  objects, for concreteness, say at least $\epsilon' V(i) / 2$ volume.

  Case 2: The $i$th buffer is \defn{underfull}, i.e., contains less than
  $\epsilon' V(i) / 2$ volume. Case 2 occurs
  because of roundoff. Specifically, $\epsilon' V(i)$
  may not be large enough to accommodate even one object in
  size-class $i$.
  
  We first deal with Case~1. We need to show that the initial
  allocation cost of objects in the buffer is sufficient to pay for
  the reallocation cost of objects in the payload segment. Since the
  objects in the buffer belong to the $i$th or earlier size classes,
  they can each have size at most $2^i$.  The cost per unit size, 
  $f(x)/x$, is nonincreasing, so the
  cost of allocating the objects in the buffer is at least
  $\Omega((f(2^i)/2^i)(\epsilon' V(i)))$. Since $f$ is subadditive, we
  have $f(2^i) = O(f(2^{i-1}))$, which implies that this buffer cost
  is at least $\Omega((f(2^{i-1})/2^{i-1})(\epsilon' V(i)))$.  If we
  charge each buffered object for $\Theta(1/\epsilon')$ reallocations,
  it follows that we can afford the total cost of at most
  $(f(2^{i-1})/2^{i-1})V(i)$ to reallocate all objects in the payload
  segment.  This case is completed by observing that each object is only flushed once: after an
  object moves to the payload segment, it stays there until it is
  deallocated.

  We next deal with Case~2, where buffer~$i$ is underfull.
  Buffer~$i$ participates in the buffer-flush operation because some
  object belonging to size class $i' \leq i$ is placed in some buffer
  for size class $j > i$.  We charge that object for flushing any
  underfull buffers between size class $i'$ and size class $j$. (There
  may be many such objects, which only decreases the cost per
  object---we pessimistically charge only a single object.)

  The main question, then, is: what is the maximum reallocation cost
  due to underfull buffers that can be charged against an object in
  size-class $i'$?  Size-class $i$ may only be charged against the
  object if $2^{i'} > \epsilon' V(i)/2$.  This implies that $V(i) =
  O(2^{i'}/\epsilon')$, and hence the cost of moving every object in
  size-class $i$ is at most $O(1/\epsilon')$ times the cost of
  allocating a single object in size-class $i'$.  Because each
  successive size class doubles in size, and a size class only has a
  payload segment (and buffer segment) if there is at least one object
  in the size class, only the $O(\log(1/\epsilon'))$ nearest size
  classes may satisfy $2^{i'} > \epsilon' V(i)/2$---in particular,
  $\epsilon' 2^{i'+\ceil{\log(1/\epsilon')} + 1}/2 \geq 2^{i'}$, and
  hence if any larger size-class is underfull, it will not be
  ``skipped over'' by an object in size-class $i'$.

  To conclude, buffered objects in size-class $i'$ may be charged for
  $O(1/\epsilon')$ reallocations in $O(\log(1/\epsilon'))$ different size
  classes, for a total cost of $O((1/\epsilon')\log(1/\epsilon'))$
  allocations.
\end{proof}

\subsection*{Corollary: Defragmenting/Sorting} 
\seclabel{sort}

A corollary of cost-oblivious storage reallocation is a cost-oblivious
defragmentation algorithm, i.e., a cost-oblivious algorithm for
sorting the objects while simultaneously respecting constraints on the
space usage.

We first compare with na\"\i{}ve defragmentation.  If $2V$ working
space is allowed, then defragmentation is trivial with two movements
per object.  First pack the objects into the rightmost $V$ space,
using one move per object.  Then place each object directly in its
final destination within the leftmost $V$ region of space.

The following theorem shows that defragmentation is possible even using
 $ (1+\epsilon)V+ \Delta$  space by applying cost-oblivious storage
reallocation as a black box.

\begin{theorem} 
For any $0 < \epsilon \leq 1/2$ there exists
  a cost-oblivious defragmentation algorithm that takes as input
  (1)~an arbitrary comparison function, (2)~a set of objects with
  volume $V$, and (3)~a current allocation of the objects using space
  at most $(1+\epsilon)V$.  The algorithm sorts the objects according
  to the comparison function, subject to:
  \begin{itemize}
  \item the total space usage at any time never exceeds
    $(1+\epsilon)V+\Delta$ space, and

  \item the total cost is at most $O((1/\epsilon)\log(1/\epsilon))$
    times the cost to allocate all of the objects. 
  \end{itemize}
\end{theorem}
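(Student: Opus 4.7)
The plan is to reduce the defragmentation problem to a single simulated execution of the storage reallocator of \secref{simplealloc}, used as a black box. First I would compute the sorted order of the $n$ input objects using the supplied comparison function, labeling them $o_1,\dots,o_n$ in sorted order. I would then simulate the reallocator on the following request sequence: (i) delete each of the currently allocated objects from the structure in arbitrary order, then (ii) insert each $o_i$ in sorted order. The final allocation produced by the reallocator is the output of the defragmenter, and the physical moves it performs during the simulation are exactly the moves the defragmenter performs on the actual storage.

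For the space bound, I would invoke \lemref{makespan-space}: at every moment during the simulation the active volume is at most $V$, so the footprint is at most $(1+O(\epsilon'))V$, which by tuning $\epsilon' = \Theta(\epsilon)$ is at most $(1+\epsilon)V$. The additive $\Delta$ in the space bound comes from the variant (to be introduced in the checkpointing section) in which the object being inserted is placed on the array before the triggered flush executes, so up to one object of size at most $\Delta$ sits temporarily in the overflow region during a flush. The cost bound is then immediate from \lemref{makespan-cost}: each object $o_i$ of physical size $w_i$ contributes at most $O((1/\epsilon)\log(1/\epsilon))\cdot f(w_i)$ to the total reallocation cost during the simulation, and summing over $i$ gives $O((1/\epsilon)\log(1/\epsilon))$ times the cost $\sum_i f(w_i)$ of a single allocation of every object.

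The step I expect to be the main obstacle is reconciling the reallocator's native layout (objects grouped by physical size class, and within a class ordered by insertion time) with the requirement that the final layout be sorted by the \emph{arbitrary} comparison function. My plan is to run the reallocator with size classes redefined in terms of the comparison function: size class $i$ consists of those objects whose rank in the sorted order lies in $[2^{i-1},2^i)$. With this redefinition, objects end up grouped in sorted-rank order across classes, and since we insert each class's objects in sorted order they also appear sorted within each class, yielding a fully sorted output. I would then re-examine the charging argument of \lemref{makespan-cost}: the analysis uses subadditivity of $f$ together with the fact that objects moved into the $i$th payload segment have physical size bounded by that of the largest object in the buffered prefix of classes. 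Under the rank-based definition I would replace each occurrence of ``$2^i$'' in the proof by the actual maximum physical size among objects in the $i$th class, observe that subadditivity still makes $f(x)/x$ monotone and so the per-unit-volume charging still works, and conclude the same $O((1/\epsilon)\log(1/\epsilon))$ amortized bound per object.
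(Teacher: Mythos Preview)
Your approach has two genuine gaps. First, step (i) is ill-defined: the reallocator of \secref{simplealloc} is a data structure that starts empty and can only delete objects it has itself inserted; the input here is an \emph{arbitrary} allocation, not one the reallocator produced, so there is nothing for it to ``delete.'' Second, and more seriously, redefining size classes by rank breaks the charging argument of \lemref{makespan-cost}. That argument needs a two-sided bound---objects in payload $i$ have physical size at least $2^{i-1}$ while buffered objects landing there have size at most $2^i$---so that cost-per-unit-volume ratios match within a constant factor. With rank-based classes a single large buffered object can trigger a flush of many size-$1$ payload objects; for $f(w)=1$ this charges an unbounded number of moves to one insert. Replacing ``$2^i$'' by the maximum physical size in rank-class $i$ supplies only one direction of the needed inequality, and the Case-2 (underfull) argument, which bounds how many classes one object can skip over via $2^{i'} > \epsilon' V(i)/2$, collapses entirely since the rank index $i'$ says nothing about the object's physical size.

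The paper avoids both issues by never asking the reallocator to produce a sorted layout. It first packs all objects into the rightmost $V$ cells, runs a fresh reallocator (with its native size-based classes) in the now-empty left $\floor{\epsilon V}$ prefix, and transfers objects one by one from the suffix into the reallocator, using the extra $\Delta$ cells as a staging area for the object in transit---this, not the checkpointing variant, is the source of the additive $\Delta$. Once all objects are loaded, it deletes them from the reallocator in reverse sorted order and appends each to a growing sorted suffix; since deletions keep the reallocator within $(1+\epsilon)W$ space for its current volume $W$, prefix and suffix never overlap. The sorted output lives in the suffix, not inside the reallocator.
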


\begin{proof}
  First crunch the objects into the rightmost $V$ space, leaving a
  size-$\floor{\epsilon V}$ prefix of the array empty.  We reserve
  this prefix to run the cost-universal storage-reallocation
  algorithm.  Starting with the leftmost object in the suffix, remove
  it from the suffix, store it temporarily in the $\Delta$ additional
  space, and then insert it into the prefix using cost-universal
  storage reallocation. Since the storage reallocation guarantees at
  most $(1+\epsilon)W$ space usage, for $W$ total volume of objects in
  the prefix, at no point does the prefix of size at most $(1+\epsilon)W$ 
  overlap the suffix of size $(V-W)$.  When this process completes, the
  suffix is empty and all objects are in the cost-universal-storage
  data structure.

  Next,  move elements back to the suffix in reverse sorted order.
  Specifically, delete each object from the prefix (using the
  cost-universal storage-reallocation algorithm), which 
  compacts the space used, and place the object just before its
  successor in the suffix.  Again, at any time, if $W$ is the remaining
  volume of objects in the prefix, the prefix uses at most
  $(1+\epsilon)W$ space, and the suffix uses exactly $V-W$ space, so
  the prefix does not overlap the suffix.
\end{proof}

Note that the additional $\Delta$ working space is unavoidable when
reallocating large objects. To see this, consider a single size-$\Delta$
object.  This object cannot be moved unless the target location is not
overlapping with the original location.  That is, if we have less than
$2\Delta$ space to work with, the object can never be moved as every
target location overlaps its current location.

\section{Footprint Minimization in a  Database Context}
\seclabel{db}

This section extends the storage-reallocation algorithm to take into
account issues that arrise in databases: durability and
blocking.  To provide durability, we extend the algorithm to work with
a checkpointing mechanism.  Specifically, we show how to complete a
buffer flush in $O(1/\epsilon)$ checkpoints.
During a flush, the memory footprint increases by an additive $\Delta$
term, up to $(1+\epsilon)V + \Delta$, where $V$ is the total length of
all active objects, and $\Delta$ is the length of the longest object.
The additive $\Delta$ is unavoidable due to the fact that when a large
object is moved, its new location cannot overlap its old location.

To prevent updates from blocking for too long, we present a (partially)
deamortized version.  The deamortized data structure
has the same amortized reallocation cost and memory footprint as
the original, but it also has a worst-case reallocation cost
of $O((1/\epsilon)w f(1) + f(\Delta))$ for inserting/deleting a
size-$w$ object.  That is, on each update, the total length of jobs
reallocated is roughly proportional to the size of the object being
inserted/deleted.  Viewed differently, the deamortized bound shows that the
desired footprint bound can be maintained with nonblocking updates,
% as
%long as the rate of updates is sufficiently lower than the speed of
%the machine.
%
as long as the updates arrive infrequently enough that the previous
update has been handled, that is, as long as the previous update of
size $w$ is followed by a gap of size $\Omega((1/\epsilon)w f(1) +
f(\Delta))$.

\subsection{Overview of the Checkpointing Model}

Recall that moving an object updates the map that is maintained
between logical and physical addresses.  From time to time, and
specifically during a checkpoint, this map is written to disk, so that
a database that is recovering from a crash has access to the updated
map.  Suppose an object is reallocated.  Then the map must be updated.
But if a crash occurs before the next checkpoint, the updated map will
not be available to the database on recovery.  Therefore, we must
maintain two copies of the data---at the old and new locations---until
the next checkpoint has completed.  Only then is it safe to assume
that the database knows, in a durable fashion, the new location of the
data.  

The consequence for designing a reallocator is that from time to time,
the database will perform a checkpoint, and all the space that was
freed since the last checkpoint will become available.  The
requirement that moved data reside in two locations until the next
checkpoint means that the system needs an enforcement mechanism.  This
mechanism guarantees that 
if our
algorithm would like to write to a freed but not checkpointed location
it will block.  Therefore, a reallocation algorithm is better if it
requires fewer checkpoints to compete.  For example, if we were to
write the data to completely new locations, the algorithm would not
block on any checkpoints, because we would not be reusing any space.
However, the competitive ratio of the footprint would be at least
two.  We show below that we can achieve our bound of $(1+\epsilon)$
competitive ratio while blocking on at most $O(1/\epsilon)$ checkpoints.

The timing of checkpoints is dependent on many considerations
beyond the needs for reallocation, so we assume that checkpoints are
initiated by the system, rather than our algorithm.  There are other
models of checkpointing, such as log-trimming through incremental
checkpointing.  A complete treatment of checkpointing is beyond the
scope of this paper, though it would be interesting to see how
different types of checkpointing interact with reallocation.

\subsection{Flushing with Checkpoints}

The goal of the flush here is identical to that in \secref{simplealloc},
but the implementation details differ to accommodate the checkpointing
model.  Namely, the space used increases by an additive $\Delta$, and
the flush itself proceeds in several rounds with checkpoints in
between.  Another improvement here is that an inserted element
gets inserted \emph{before} the flush completes, whereas in
\secref{simplealloc} we assumed for simplicity that the insert blocks
until the flush completes.  The memory footprint at the end of the
flush is identical to that of the previous algorithm.

% Note that the additional $\Delta$ working space is unavoidable when
% reallocating large objects. To see this, consider a single size-$\Delta$
% object.  This object cannot be moved unless the target location is not
% overlapping with the original location.  That is, if we have less than
% $2\Delta$ space to work with, the object can never be moved as every
% target location overlaps its current location.

\subheading{Inserting (allocating) and deleting (deallocating)} Since
objects only move during a buffer flush, the insert and delete
procedure is almost identical to \secref{simplealloc}.  The only
difference here is that we insert the object \emph{before} triggering
a flush.

To insert an object, place it in the appropriate buffer segment as
before.  If there is insufficient space to place the object in any
following buffer segment, place it at the end of the last buffer
segment (filling and exceeding the buffer capacity) and trigger a flush.  When
deleting an object, insert a dummy delete request as in
\secref{simplealloc}.  If this delete request would overflow the last
buffer, then trigger the flush without using space for the dummy
delete request.

\subheading{Buffer flush}
A flush proceeds as follows.  First identify the boundary size class
$b$ as before.  Recall that the flush proceeds on size classes $i\geq
b$.  Let $L$ denote the endpoint of the last object \emph{before} the
insert/delete that triggers the flush, i.e., if the total space is $S$
including a newly inserted size-$w$ object, then
$L=S-w$. (Note that this detail of subtracting off the newly inserted
  object is important to obtain a space usage of $(1+\epsilon)V +
  \Delta$ throughout the flush rather than $(1+\epsilon)V +
  O(\Delta)$.)
Let $L'$ be the desired memory footprint after the flush, but
subtracting off the size of any flush-triggering
insert; similar to the procedure for ``$S$'' discussed in
  \secref{simplealloc}, $L'$ can be calculated by first computing
  $\sum_{i \geq b} (V_t(i) + \floor{\epsilon' V_t(i)})$.  That is, if
the final data structure should take $S'$ space after the flush, then
$L' = S'-w$, where $w$ is the size of the last insert if the flush was
triggered by an insert.  Let $B$ be the total space occupied by the
buffers involved in the flush.  Move all objects from buffer segments
$i\geq b$ to the end of the array, starting from location
$(\max\set{L,L'}+B+\Delta)$.  The important observation here is that
$L+\Delta$ exceeds the location of the newly inserted object, so none
of the target locations overlap any of the current objects.  Hence
all of these movements can be performed within a single checkpoint.
The order in which the buffered objects are moved does not matter.
This step of the flush is similar to \secref{simplealloc}, except the
starting location is up to $B+\Delta$ slots later in the array.

Next, iterate over payload segments from largest to smallest, moving
objects as late as possible in the array ending at location
$(\max\set{L,L'}+B+\Delta)$.  After this step, flushed payload
segments are packed as late as possible before location
$(\max\set{L,L'}+B+\Delta)$, and flushed buffer segments (including the
newly inserted object) are packed as early as possible after
$(\max\set{L,L'}+B+\Delta)$.

This payload-packing step, however, moves objects to locations in the
array that may have previously been occupied, which would violate the
checkpointing model.  Instead, break these movements into phases with
checkpoints between each phase. Move as many objects
as possible before exceeding $B+\Delta$ volume in each phase.  Since the largest
object has size $\Delta$, the minimum amount moved is $B+1$.
As we
shall prove, the movements within a phase do not overlap, and the
total number of phases is $O(1/\epsilon')$.  Aside
from checkpointing, this step differs from the version in
\secref{simplealloc} in that objects are packed later in the array
rather than earlier, and hence the movements iterate from
largest-to-smallest size class rather than smallest-to-largest.  The
reason for this change is to take advantage of the $B+\Delta$ working
space available at the end of the region.

Next, iterate over payload segments from smallest to largest, moving
the objects exactly where they should go in the array.  This step,
again, may move objects to space that was previously occupied, so we
again break it into phases consisting of the next $B+1$ to $B+\Delta$
target locations with a checkpoint following each phase.

Finally, move the buffered elements to their target locations. Since
all buffered elements are currently located after
$(\max\set{L,L'}+B+\Delta)$, and all target locations are before
$L'+\Delta$, none of these movements overlap, and they can be
performed within a single checkpoint.

\subheading{Analysis} Note that the number of reallocations is similar
to that in \secref{simplealloc}, with the only difference being one
reallocation for the flush-triggering item.  Hence the reallocation
cost of \lemref{makespan-cost} holds for this version of the
algorithm.  The space used after a flush completes is also identical
to \secref{simplealloc}.  It remains to prove three facts: 1) the
space used during a flush is $(1+O(\epsilon'))V+\Delta$ where $V$ is
the total volume of active jobs, 2) the object movements between
checkpoints only move objects to nonoverlapping locations, and 3) the
number of checkpoints is $O(1/\epsilon')$ per flush.

\begin{lemma}
  While processing any allocation/deallocation request, the total
  footprint used by
  the algorithm is at most $(1+O(\epsilon'))V + \Delta$, where $V$
  denotes the total volume of all currently active
  objects.
\end{lemma}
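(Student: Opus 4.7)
The plan is to reduce to \lemref{makespan-space} for the ``steady state'' footprint and then to bound the additional slack introduced by the checkpointed flush procedure. Let $t$ index the request currently being processed, let $V=\sum_i V_t(i)$ be the total active volume including any insert being processed (and excluding any delete already accounted for by a dummy record), and recall that the $(1+O(\epsilon'))V$ bound was established previously by comparing the current $V_t(i)$'s against the $V_{f_i}(i)$'s at the times of the last flushes. The only thing that can increase the footprint above $(1+O(\epsilon'))V$ is the temporary working space allocated during a flush, so the entire proof reduces to auditing the four phases of the flush.

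First, I would bound $L$ and $L'$, the ``before'' and ``after'' endpoints of the flushed region. For $L$: all size classes $j < b$ are unaffected and the flushed size classes $i \geq b$ occupy at most $(1+O(\epsilon'))\sum_{i\geq b}V_{f_i}(i)$ by construction, so $L \leq (1+O(\epsilon'))V$ by the same argument as in \lemref{makespan-space}. For $L'$: by the definition of $L'$, the post-flush footprint (excluding the triggering insert) is $(1+\epsilon')\sum_{i\geq b}V_t(i)$ for the flushed region plus the unflushed prefix, which again totals $(1+O(\epsilon'))V$. Moreover $B \leq \epsilon' \sum_{i \geq b}V_t(i) = O(\epsilon')V$ since each flushed buffer is sized proportionally to its payload.

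Next, I would walk through the four flush phases and show each maintains the claimed footprint. In phase~1, buffered objects (total volume $B$) are appended starting at location $\max\{L,L'\}+B+\Delta$, so the maximum address used is at most $\max\{L,L'\}+2B+\Delta \leq (1+O(\epsilon'))V + \Delta$. In phases~2 and~3 (payload packing and then payload re-placement, each executed in checkpointed sub-phases), objects only move within the range $[0, \max\{L,L'\}+B+\Delta]$, so the footprint does not grow. In phase~4, buffered objects move from the overflow region into their final positions inside $[0, L'+\Delta]$, which again does not grow the footprint. Combined, the maximum footprint across the entire flush is $\max\{L,L'\} + 2B + \Delta = (1+O(\epsilon'))V + \Delta$.

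The main subtlety, and where I expect the most care to be required, is the bookkeeping for the triggering insert itself: $L$ is defined to exclude the newly inserted size-$w$ object, so we must verify that charging the $B+\Delta$ slack past $\max\{L,L'\}$ (rather than past $\max\{L,L'\}+w$) is enough to guarantee that the phase-1 target locations lie strictly beyond the current occupied region. This is exactly the reason the algorithm uses $L+\Delta$ as an offset: since $w\leq\Delta$, the inserted object sits inside $[L, L+\Delta]$, and the phase-1 destinations starting at $\max\{L,L'\}+B+\Delta$ do not overlap it. Once this non-overlap observation is in place, all four phase bounds combine to give the claimed $(1+O(\epsilon'))V+\Delta$ footprint throughout the flush, and \lemref{makespan-space} handles every moment outside a flush.
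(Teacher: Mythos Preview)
Your proposal is correct and follows essentially the same approach as the paper: bound the pre- and post-flush footprints $L$ and $L'$ via \lemref{makespan-space}, bound $B$ as an $O(\epsilon')$ fraction of the total space, and observe that the maximum address ever used occurs when the buffered objects are written out past $\max\{L,L'\}+B+\Delta$ (your phase~1). The paper's Case~1/Case~2 split is exactly your $\max\{L,L'\}=L$ versus $\max\{L,L'\}=L'$, and your phase-by-phase check that phases~2--4 stay within the phase-1 envelope is implicit in the paper's proof.

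Two minor imprecisions worth tightening: (i) the buffered volume moved in phase~1 is $B+w$, not $B$, since the triggering insert overflows the last buffer---this extra $w$ is absorbed because $L\leq(1+O(\epsilon'))(V-w)$ and $L'\leq(1+O(\epsilon'))V-w$, so the bound still closes; (ii) $B$ is sized at the times $f_i$ of the previous flushes, so strictly $B=\sum_{i\geq b}\lfloor\epsilon' V_{f_i}(i)\rfloor$ rather than $\epsilon'\sum_{i\geq b}V_t(i)$, though the $O(\epsilon')V$ conclusion is unchanged (the paper writes this as $B\leq\epsilon' S_{\text{before}}$). Your discussion of the non-overlap subtlety for the triggering insert is correct but belongs to the algorithm's well-definedness rather than to this space lemma.
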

  \newcommand{\Vb}{V_{\scriptsize\id{\rm before}}}
  \newcommand{\Va}{V_{\scriptsize\id{\rm after}}}
  \newcommand{\Sb}{S_{\scriptsize\id{\rm before}}}
  \newcommand{\Sa}{S_{\scriptsize\id{\rm after}}}

\begin{proof}
  Let $\Vb$ and $\Va$ denote the total volume of objects before and
  after the operation, respectively.  Let $\Sb$ and $\Sa$ denote the
  total space of the data structure before and after the operation,
  respectively.  According to \lemref{makespan-space}, we have $\Sb
  \leq (1+O(\epsilon'))\Vb$, and $\Sa \leq (1+O(\epsilon')) \Va$.  The
  question is what happens during the operation, notably during a
  flush operation.

  Suppose the flush is triggered by a size-$w$ insertion.  The volume
  during the flush is thus $V = \Vb + w = \Va$.  The space used to
  store all buffered objects, including the newly inserted object, is
  at most $w+B$, where $B$ is the total amount of space devoted to
  buffers before the flush.  Note that since the buffers are sized to
  less than an $\epsilon'$ fraction of the total space, we have $B
  \leq \epsilon'
  \Sb$.\\
  Case 1: $\Sb \geq \Sa$.  Then these objects are written at an offset
  of $(\Sb + B + \Delta)$, meaning that the total space during the
  flush is at most
  \begin{eqnarray*}
    &&(\Sb + B + \Delta) + (w+B) \\
    &\leq& (1+2\epsilon')\Sb + w +
    \Delta \hfill\quad\quad \text{\footnotesize \quad \sf // upper bound on $B$}\\
    &\leq& (1+2\epsilon')\left[(1+O(\epsilon'))\Vb\right] + w + \Delta
    \hfill \text{\footnotesize \quad \sf // \lemref{makespan-space}} \\
    &\leq& (1+O(\epsilon'))\Vb + w + \Delta \hfill \quad\!\! \text{\footnotesize \quad \sf // larger const in big-O}  \\
    &\leq& (1+O(\epsilon'))(\Vb + w) + \Delta \\
    &=& (1+O(\epsilon'))V+\Delta \ .
  \end{eqnarray*}
  Case 2: $\Sb < \Sa$. Then these objects are written at an offset of
  $(\Sa-w) + B + \Delta$.  And the total space during the
  flush is at most $\Sa + 2B + \Delta \leq
  (1+O(\epsilon'))\Va + \Delta = (1+O(\epsilon'))V +
  \Delta$, where the steps follow from analogous steps in Case~1.

  In the case of a deletion, the argument is similar, except $w$
  becomes 0 in all the expressions, and $V=\Vb$ throughout the flush.
  That is, the deleted object is considered active until the flush completes.
\end{proof}

\begin{lemma}
  During a single phase of object movements between two checkpoints,
  all object starting locations are disjoint from all object ending
  locations.
\end{lemma}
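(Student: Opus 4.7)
The plan is to case-analyze the four kinds of phases in the flush procedure. A flush consists of: (i) a single phase moving the buffered objects of size classes $\geq b$ to the tail past location $\max\{L,L'\}+B+\Delta$; (ii) several phases that pack the payload from largest to smallest into the region ending at $\max\{L,L'\}+B+\Delta$, each moving $B+1$ to $B+\Delta$ volume; (iii) several phases that redistribute the packed payload into its final positions from smallest to largest, each again moving $B+1$ to $B+\Delta$ volume; and (iv) a single phase depositing the buffered objects into their final destinations. For (i) and (iv), disjointness essentially follows from the separation already recorded in the flush description: in (i), current buffered positions lie at most at $L+w \leq L+\Delta$ while targets begin at $\max\{L,L'\}+B+\Delta$; in (iv), the buffered sources lie past $\max\{L,L'\}+B+\Delta$, while the targets end at $L'+\Delta \leq \max\{L,L'\}+\Delta$.

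For stage (ii), the plan is to maintain the invariant that at the start of each phase $k$, the left boundary $R_t^{(k)}$ of the already-packed tail and the right boundary $R_c^{(k)}$ of the still-unmoved payload satisfy $R_t^{(k)}-R_c^{(k)} \geq B+\Delta$. Initially $R_c^{(0)} \leq L \leq \max\{L,L'\}$ and $R_t^{(0)} = \max\{L,L'\}+B+\Delta$, establishing the base case. In phase $k+1$, the algorithm moves the rightmost unmoved objects of total volume $V_p \leq B+\Delta$ into the contiguous block $[R_t^{(k)}-V_p,\,R_t^{(k)}]$. Since $V_p \leq R_t^{(k)}-R_c^{(k)}$, the source positions, all at or before $R_c^{(k)}$, are disjoint from this target block. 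After the phase, $R_t$ drops by exactly $V_p$ while $R_c$ drops by at least $V_p$ (possibly more if deletion or buffer gaps become newly exposed at the right edge), so the invariant is preserved.

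Stage (iii) is symmetric but needs extra bookkeeping because target positions in the final layout are interleaved with empty buffer segments of width $\floor{\epsilon' V_t(i)}$, whereas the packed source positions are contiguous. I would track $L_c^{(k)}$, the left boundary of the remaining packed tail, and $L_t^{(k)}$, the right boundary of the already-filled target positions. The key identity is that the total width of the flushed region after the flush equals $V^* + E^*$, where $V^* = \sum_{i \geq b} V_t(i)$ and $E^* = \sum_{i \geq b}\floor{\epsilon' V_t(i)}$; combined with $\max\{L,L'\} \geq L'$ and the fact that the packed-payload volume $V_P$ differs from $V^*$ by at least the size $w$ of any flush-triggering insert (since that insert contributes to $V^*$ but not to $V_P$), a direct computation yields $L_c^{(k)} - L_t^{(k+1)} \geq B + \Delta - V_p \geq 0$. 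The main obstacle I anticipate is exactly this accounting in stage (iii): the buffer gaps totalling $E^*$ must be absorbed into the $V^*+E^*$ span rather than counted against the per-phase volume budget $B+\Delta$, and the insert-triggered and delete-triggered cases must be handled uniformly (with $w=0$ for deletes, in which case $V^* - V_P$ is nonnegative by a small sub-case analysis on whether the deleted object came from a payload or buffer segment).
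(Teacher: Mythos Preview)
Your decomposition into four phase types and your treatment of stages (i), (ii), and (iv) match the paper exactly; in particular, your inductive invariant $R_t^{(k)} - R_c^{(k)} \geq B+\Delta$ for the packing step is precisely the paper's argument (with the same inductive step: target boundary drops by exactly $V_p$, source boundary by at least $V_p$).

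The divergence is in stage (iii). You propose a direct inductive computation tracking $L_c^{(k)}$, $L_t^{(k)}$, $V^*$, $E^*$, and $V_P$, and you correctly anticipate that the bookkeeping is delicate: the target boundary $L_t$ grows not just by the volume $V_p$ moved but also by the interspersed buffer gaps \emph{and} by the slots reserved inside each payload segment for formerly-buffered objects, so the simple ``both sides shift by $V_p$'' induction from stage (ii) breaks. The paper sidesteps this entirely with a short contradiction argument. Using the paper's notation ($\ell_j$ for your $L_t$, $r_j$ for your $L_c$, and $X$ for the remaining packed volume), it observes the global identity $r_j + X = \max\{L,L'\} + B + \Delta$, which holds throughout the unpacking because the packed block always ends at that fixed offset. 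If ever $\ell_j > r_j - B - \Delta$, then the last payload object would have to land past $\ell_j + X > r_j + X - B - \Delta \geq L'$, contradicting that the final layout fits in $L'$. This two-line argument never needs to track $E^*$, $V_P$, or the insert/delete case split you worry about. Your approach can be made to work, but the paper's use of the conserved quantity $r_j + X$ is the cleaner route and is worth adopting.
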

\begin{proof}
  First, consider the payload-packing step, where payload segments are
  packed to the right.  At the start of the $j$th phase, let $\ell_j$
  denote the last cell occupied by the payload segments that have yet
  to be packed, and let $r_j$ denote the first occupied cell later
  than $\ell_j$.  We claim that at the start of each phase $r_j \geq
  \ell_j + B + \Delta$, which we shall prove by induction.  If true,
  the claim implies disjointness: if the space between $r_j$ and
  $\ell_j$ is at least $B + \Delta$, then we can pack up to $B+\Delta$
  volume of jobs in front of $r_j$ during the $j$th phase before
  overlapping the ending position of jobs at $\ell_j$.

  We prove the claim by induction.  The claim holds initially because
  $\ell_0 \leq L$, and $r_0 \geq L + B + \Delta$.  For the inductive
  step, observe that if $X$ volume of objects are moved in phase $j$,
  then $\ell_{j+1} \leq \ell_j - X$, and $r_{j+1} = r_j - X$.
  Combined with the inductive assumption that $r_j \geq \ell_j + B +
  \Delta$, we get $r_{j+1} \geq (\ell_j + B + \Delta) - X \geq
  ((\ell_{j+1} + X) + B + \Delta) - X = \ell_{j+1} + B + \Delta$.

  We next consider the unpacking step, where the payload segments are
  moved to their final positions.  Let $\ell_j$ denote the last cell
  occupied by unpacked payload segments at the start of the $j$th
  phase of movements, and let $r_j$ denote the first cell occupied by
  the yet-to-be unpacked payload objects.  We claim that $\ell_j + B +
  \Delta \leq r_j$ (but this time we shall prove it by contradiction).
  If the claim holds, then we can afford to increase $\ell_j$ by
  $B+\Delta$ in each phase without violating the disjointness.

  To prove the claim, suppose for the sake of contradiction that
  $\ell_j > r_j - B - \Delta$, and let $X$ be the total volume
  remaining in the packed region.  Then the final position of the
  last payload segment can end no earlier than $\ell_j + X > r_j + X - B -
  \Delta$ after the unpacking, and hence the space desired by these
  payload segments is at least $L' > (r_j + X) - B - \Delta$.  We also
  have $r_j+X = \max\set{L',L}+B+\Delta$ is the offset at which the
  buffered objects were moved, which we simplify to $r_j+X \geq L' + B
  + \Delta$. Combining these two facts, we get $L' > (r_j + X) - B -
  \Delta \geq (L' + B + \Delta) - B - \Delta = L'$, i.e., $L' > L'$,
  which is a contradiction.
\end{proof}

\begin{lemma}
  The number of checkpoints occurring during a flush is
  $O(1/\epsilon')$.
\end{lemma}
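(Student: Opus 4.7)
The plan is to bound separately the checkpoints used in each of the four steps of the flush and observe that only the two phased steps (payload packing and unpacking) can contribute more than $O(1)$ checkpoints. The initial step (moving buffers to the overflow region) and the final step (moving buffered objects to their payload destinations) are each performed in a single checkpoint, by the disjointness arguments in the preceding lemma and the choice of offsets. So the entire task reduces to counting the number of phases in the packing step and in the unpacking step.

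For each phased step, every phase moves at least $B+1$ units of volume by construction. The total volume moved in the packing step is exactly the volume of the payload segments of the flushed size classes, which is $\sum_{i\ge b} V_t(i)$; the unpacking step moves the same total volume. Thus the number of phases in each step is at most $\left(\sum_{i\ge b} V_t(i)\right)/(B+1)$.

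It therefore suffices to show $B = \Omega\bigl(\epsilon'\sum_{i\ge b} V_t(i)\bigr)$, since then each phase count is $O(1/\epsilon')$. By Invariant~3, the $i$th buffer segment has size $\lfloor \epsilon' V_{f_i}(i)\rfloor$ immediately after its most recent flush at time $f_i$, so $B = \sum_{i\ge b}\lfloor \epsilon' V_{f_i}(i)\rfloor$. The argument in the proof of \lemref{makespan-space} shows that $\sum_{i\ge b} V_{f_i}(i)$ and $\sum_{i\ge b} V_t(i)$ agree up to a $(1\pm\epsilon')$ factor, because the net difference between them is precisely the current buffer contents, which is itself bounded by $B \le \epsilon'\sum_{i\ge b} V_{f_i}(i)$. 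Consequently $B = \Theta\bigl(\epsilon'\sum_{i\ge b} V_t(i)\bigr)$ (for $\epsilon'\le 1/2$), and the number of phases in each phased step is $O(1/\epsilon')$.

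The main subtlety I anticipate is handling the rounding and the flush-triggering insert/delete cleanly: the $\lfloor\cdot\rfloor$ in the buffer size means that for size classes where $\epsilon' V_t(i) < 1$ the buffer could contribute $0$ to $B$, and one must verify that such underfull buffers do not inflate the phase count. This is handled by the same case split used in \lemref{makespan-cost}: underfull buffers can only exist in the $O(\log(1/\epsilon'))$ size classes closest to $b$, and their total volume is dominated by $V_t(b)$ times a geometric series, so they contribute only $O(1)$ to the phase count. Summing the two phased steps and the two single-checkpoint steps yields $O(1/\epsilon')$ checkpoints per flush, as claimed.
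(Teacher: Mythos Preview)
Your overall decomposition matches the paper's: the first and last steps cost one checkpoint each, and the packing/unpacking steps are bounded by (total volume moved)/(volume per phase). The paper likewise reduces to showing that the per-phase volume is $\Omega(\epsilon' S)$, where $S$ is the total space of the flushed region.

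The gap is in your handling of the rounding. You invoke the ``underfull'' argument of \lemref{makespan-cost}, but that lemma's notion of underfull concerns buffer \emph{contents} at flush time, whereas here the issue is buffer \emph{sizes} $\lfloor\epsilon' P(i)\rfloor$, fixed at the previous flush. Your claim that underfull classes lie within $O(\log(1/\epsilon'))$ of $b$ with total volume $O(V_t(b))$ does not follow: the zero-buffer classes are exactly those with $P(i)<1/\epsilon'$, i.e., the globally smallest $O(\log(1/\epsilon'))$ size classes regardless of $b$, and their combined payload can be $\Theta((1/\epsilon')\log(1/\epsilon'))$. Concretely, take $\epsilon'=1/K$ and populate each class $i=1,\dots,\lfloor\log K\rfloor$ so that $P(i)$ is just under $K$; then $B=0$, the total payload is $\Theta(K\log K)$, and your bound $P/(B+1)$ yields $\Theta((1/\epsilon')\log(1/\epsilon'))$ phases rather than $O(1/\epsilon')$. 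The claim that these classes ``contribute only $O(1)$ to the phase count'' is therefore false.

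The paper closes this gap with a different idea: it uses not only the $B{+}1$ lower bound but also the $B{+}\Delta$ upper bound on phase volume, observing that any two consecutive phases together move more than $\Delta$ volume (a phase stops only when adding the next object would exceed $B{+}\Delta$, and that next object starts the following phase). In the small-$B$ case the floor losses, being at most one per size class, force $S=O((1/\epsilon')\log\Delta)$; since $\Delta=\Omega(\log\Delta)$, this gives $O(S/\Delta)=O(1/\epsilon')$ phase pairs. Your argument needs this $\Delta$-based bound (or an equivalent) to reach $O(1/\epsilon')$.
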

\begin{proof}
  The checkpoints are dominated by the packing and unpacking steps.
  Let $P(i)$ denote the total space of the $i$th payload segment at
  the time of the flush, i.e., the volume of jobs that were in this
  size class the last time a flush occurred. Then the total size of
  flushed buffers is $B = \sum_{i\geq b} \floor{\epsilon'P(i)}$, and
  the total space of the region being flushed is $S=\sum_{i\geq b}
  (P(i) + \floor{\epsilon' P(i)})$.  Since each movement phase does
  more than $B$ work, showing that $B = \Omega(\epsilon'S)$ would be
  sufficient. The only difficulty is the floor in the expression, so
  we shall consider the case of large $P(i)$ and small $P(i)$
  separately.

  Case 1: sufficiently large $P$.  More precisely, suppose $B =
  \sum_{i\geq b} \floor{\epsilon' P(i)} \geq \sum_{i\geq b}\epsilon'
  P(i)/2$.  Then $B = \Omega(\epsilon' S)$, since $\sum_{i\geq b}P(i)
  \geq S/2$ for $\epsilon < 1$.

  Case 2: small $P$.  Suppose $B < \sum_{i\geq b}\epsilon' P(i)/2$.
  Note that $B =\sum_{i\geq b}\floor{\epsilon' P(i)} \geq \sum_{i\geq
    b}\epsilon' P(i) - \Theta(\log\Delta)$, since there are only
  $\Theta(\log \Delta)$ size classes.  It follows that $B < \sum_{i
    \geq b}\epsilon' P(i)/2$ implies $\sum_{i\geq b}\epsilon' P(i) =
  O(\log \Delta)$, and hence $S = O((1/\epsilon')\log\Delta)$.  The
  algorithm tries to move as many objects as it can until exceeding
  $B+\Delta$ volume, and hence every consecutive pair of phases moves
  at least $\Delta/2 = \Omega(\log\Delta) = \Omega(\epsilon'S)$ volume.
\end{proof}

\subsection{Deamortizing the Data Structure}

As described so far, the data structure is amortized---the average
reallocation cost per update is low, but on some updates \emph{every}
active object may need to be reallocated (i.e., when all size classes
are involved in a flush).  This section improves the worst-case
reallocation cost of a size-$w$ update to
$O((1/\epsilon)wf(1)+f(\Delta))$, without hurting the amortized update
cost or the maximum footprint.

Note that the deamortization described here builds on
the checkpointing modification, yielding a worst-case $O(1/\epsilon)$ checkpoints
per operation.% \footnote{In fact, with further modification,
% we believe it is possible to achieve 1 checkpoint per operation.}.

\subheading{Modifications to the algorithm}
The main idea of our deamortization is that if a buffer flush performs
a total of $X$ reallocations by volume, then this work is spread
across the subsequent $\epsilon' X$ updates by volume.  The question,
however, is where to place new objects that are inserted during a
flush. If, for example, an insert could trigger a smaller flush while
a larger flush is still ongoing, that would present even more
challenges.  We tackle these problems by adding two more buffers to
the data structure and modifying the flush, which serve to avoid the
issue of nested flushes.

Augment the data structure to include one size-$\floor{\epsilon' V_f}$
buffer, called the \defn{tail buffer}, following all the size-class
segments, where $V_f$ is the total volume of all jobs active at the
start of the previous buffer flush.  The tail buffer is like any other
buffer: objects are only placed in the tail buffer if all earlier
buffers are too full, and a buffer flush is only triggered once the
tail buffer becomes full.  The point of the large tail buffer is to
enable the flush to complete before triggering another flush.

When a flush is triggered, calculate the desired space and the
temporary working space as before; however, the space is slightly
  larger now due to the $\floor{\epsilon' V}$ space necessary for the
  tail buffer.  We treat all space immediately following the
temporary working space as another buffer called the \defn{log}.

The flush process resembles the previous flush process (with or without
checkpointing), except that:
\begin{enumerate}
\item Objects may be inserted/deleted during a flush.  These updates
  are placed at the end of the log.
\item The work of the flush is spread across these subsequent updates.
  Specifically, on an insertion/deletion of a size-$w$ object, perform
  (just over) the next $(4/\epsilon')w$ steps of the flush by volume.
  Since a fractional object cannot be moved, the amount of volume
  processed may be as high as $(4/\epsilon')w+\Delta$.
\item There is an extra phase at the end.  During this phase, all
  objects in the log are moved to their appropriate buffers, i.e.,
  they are re-inserted/re-deleted.  This phase proceeds in order from
  the beginning of the log.  Updates may continue to be recorded at
  the end of the log during this phase.  Since the volume moved is
  significantly larger than the size of the update, the
  re-insertion/re-deletion will eventually ``catch up'' to the end of
  the log, at which point the flush terminates and the log
  disappears.
\end{enumerate}

\subheading{Analysis}
To show correctness of the new flush protocol, we argue that the log
is drained completely before another flush gets triggered, i.e.,
before the tail buffer fills. Note that if the last update during a
flush involves a large object, that update may finish the previous
flush and trigger the next one. The point is only that the tail buffer
cannot overflow before that time.

\begin{lemma}\lemlabel{logsize}
  Let $V_f$ be the total volume of active objects at the time a flush
  is triggered. For any $\epsilon' < 1$, the flush completes by the
  time the subsequent volume of updates first exceeds $\epsilon'
  V_f$.
\end{lemma}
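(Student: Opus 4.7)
The plan is to bound the total reallocation work that the flush must perform and compare it against the $(4/\epsilon')$-overspeed rate at which that work gets charged to subsequent updates. First I would bound the ``main'' flush work (everything except the log drain): each of the four main phases---evacuating buffered objects to the overflow region, packing payload segments rightward, unpacking them to their final positions, and placing buffered objects at their targets---touches each flushed object at most once. Since the total volume of flushed payloads is at most $V_f$ and flushed buffers at most $\epsilon' V_f$, the main work is bounded by roughly $3V_f$.

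Second I would account for the log-draining phase. By the time the cumulative subsequent update volume is $W$, the log has received at most volume $W$ of entries, and each entry is re-inserted or re-deleted exactly once as it is drained, contributing at most an additional $W$ volume of work. So the total work required to complete the flush is at most $3V_f + W$.

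Third I would compare against progress. On each update of size $w$, the algorithm performs at least $(4/\epsilon')w$ volume of flush work (possibly overshooting by up to $\Delta$, which only helps). So by the time subsequent update volume reaches $W$, at least $(4/\epsilon')W$ volume of flush work has been completed. Setting $W = \epsilon' V_f$ gives $(4/\epsilon')W = 4V_f \ge 3V_f + \epsilon' V_f$ for $\epsilon' \le 1$, so the flush has already finished.

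The main subtlety is the self-referential accounting: log entries themselves contribute to the required work, yet they also accumulate only as subsequent updates arrive. The choice of a $(4/\epsilon')$-overspeed rate (rather than a smaller factor like $(2/\epsilon')$) is precisely what gives enough headroom to simultaneously pay off the $O(V_f)$ main cost and outrun the growing log, so the algebra in the third step closes with $\epsilon' < 1$ exactly as the lemma states.
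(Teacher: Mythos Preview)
Your approach is essentially the same as the paper's: bound the total flush work by $O(V_f)$, then observe that the $(4/\epsilon')$ overspeed rate catches up before $\epsilon' V_f$ volume of new updates arrive. The paper organizes this as a two-phase argument (the first $(\epsilon'/2)V_f$ of updates finishes the main flush, the second $(\epsilon'/2)V_f$ drains the log), whereas you do it in a single inequality; both are fine.

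One small point: your ``roughly $3V_f$'' for the main work comes from counting payload volume ($\le V_f$) and buffered volume ($\le \epsilon' V_f$) separately, giving $2(1+\epsilon')V_f$. For $\epsilon'$ close to $1$ this is nearly $4V_f$, and then your final inequality $4V_f \ge 3V_f + \epsilon' V_f$ does not quite close. The fix is the observation the paper uses implicitly: the buffered (inserted) objects are themselves active and hence already counted in $V_f$, so every object is moved at most twice and the main work is at most $2V_f$ (delete records are simply dropped). With that bound your inequality becomes $4V_f \ge 2V_f + \epsilon' V_f$, which holds for all $\epsilon' < 1$ with room to spare.
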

\begin{proof}
  In the worst case, a flush may move every object twice.
  Specifically, the buffered elements are moved out of the buffers
  temporarily, then to their final location. Similarly, the payload
  segments are packed once and then unpacked to their final location.
  The total volume of reallocations of preexisting elements is thus at
  most $2V_f$.  (Any delete records do not have to be reallocated;
  these are just destroyed.)

  It follows that by the time $(\epsilon'/2) V_f$ volume of updates
  are logged, all preexisting elements have been moved to their final
  locations.  But this analysis does not take into account the
  elements logged during the flush.  The next $(\epsilon'/2)V_f$
  volume of updates more than suffice to move all objects from the log
  to a buffer.
\end{proof}

The following lemmas bound the space and reallocation costs of the
updated algorithm.

\begin{lemma}
  After each allocation/deallocation request is processed, the
  total space used by the data structure is at most $(1+O(\epsilon'))V
  + \Delta$, where $V$ denotes the total volume of all currently
  active objects.  If a flush is not in progress, this space improves
  to $(1+O(\epsilon'))V$.
\end{lemma}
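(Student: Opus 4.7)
The plan is to reduce to the analogous bound for the checkpointed (amortized) algorithm and account separately for the two new pieces of space that deamortization introduces: the tail buffer and the log. Split the analysis into two cases depending on whether a flush is in progress.

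\textbf{Case 1: no flush is in progress.} The only structural change compared with the algorithm of \secref{db} is the tail buffer, which has size $\floor{\epsilon' V_f}$, where $V_f$ is the total active volume at the time of the last flush. By \lemref{logsize}, the total volume of updates since the last flush is at most $\epsilon' V_f$, so the current active volume $V$ satisfies $|V - V_f| \leq \epsilon' V_f$, and hence $V_f = \Theta(V)$ for $\epsilon' \leq 1/2$. Applying the space bound of \lemref{makespan-space} to the size-class portion of the layout gives $(1+O(\epsilon'))V$, and adding the tail buffer adds only another $O(\epsilon' V)$ term, which is absorbed into the big-O. Since the log does not exist outside a flush and the overflow/working region is empty, the total space is $(1+O(\epsilon'))V$.

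\textbf{Case 2: a flush is in progress.} Here I would invoke the corresponding space lemma for the checkpointed algorithm, which already bounded the combined space of the size-class regions plus the overflow working region used during a flush by $(1+O(\epsilon'))V + \Delta$, with the additive $\Delta$ coming from the $B+\Delta$ separation maintained between packed payloads and buffered objects. On top of that, I need to account for the tail buffer and the log. The tail buffer contributes $\floor{\epsilon' V_f} = O(\epsilon' V)$ as above. By \lemref{logsize}, at any point before the flush finishes, the volume of objects currently recorded in the log is at most $\epsilon' V_f = O(\epsilon' V)$; since each logged update occupies space equal to its own size (delete records included), the total space consumed by the log is $O(\epsilon' V)$. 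Summing these contributions yields total space $(1+O(\epsilon'))V + \Delta$, as claimed.

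\textbf{Main obstacles.} Two small subtleties need care. The first is bounding $V_f$ by $V$ so that the tail-buffer and log overheads become $O(\epsilon' V)$ rather than $O(\epsilon' V_f)$ with no relation to the current volume; this is exactly where \lemref{logsize} is used. The second is making sure that when a flush is near completion and updates are simultaneously being drained from the log into the data structure, we do not double-count the same objects: the drained objects move from the log into buffer/payload segments, and the accounting in Case 2 already charges them as part of the $(1+O(\epsilon'))V$ term via \lemref{makespan-space}, so once drained their contribution to the log disappears. With these two points checked, the bounds above follow directly.
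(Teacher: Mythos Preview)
Your proposal is essentially the same approach as the paper's: identify the tail buffer and the log as the only new space, bound each by $O(\epsilon' V_f)$ (the log via \lemref{logsize}), reduce the rest to the earlier space lemmas, and then argue $V_f = \Theta(V)$. The paper's proof is just a more compressed version of this, without the explicit case split.

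One small gap to fix: in Case~1 you invoke \lemref{logsize} to conclude that the total volume of updates since the last flush is at most $\epsilon' V_f$, and hence $|V - V_f| \le \epsilon' V_f$. But \lemref{logsize} only bounds the volume of updates \emph{during} a flush; once the flush has completed, further updates can accumulate in the size-class buffers and the tail buffer before the next flush triggers, and \lemref{logsize} says nothing about those. The correct way to get $V_f = \Theta(V)$ in Case~1 is exactly the buffer-volume argument from the proof of \lemref{makespan-space} (now including the tail buffer): the difference $|V - V_f|$ is bounded by the total contents of all buffers, which is $O(\epsilon' V_f)$. This is what the paper cites. With that correction, your argument goes through.
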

\begin{proof}
  The only significant difference in space between this algorithm and
  the amortized one is the tail buffer and the log.  The tail buffer
  has size at most $\epsilon' V_f$, where $V_f$ was the volume at the
  last flush.  According to \lemref{logsize}, the log also has size at
  most $\epsilon' V_f$.  Combined, the total increase to space is an additive
  $O(\epsilon')V_f$.  To complete the argument, we need only argue
  that $V_f = \Theta(V)$, where $V$ is the current volume of
  active jobs, which is done in the proof of \lemref{makespan-space}.
\end{proof}

\begin{lemma}
  For subadditive cost function $f$, the amortized cost of inserting
  or deleting a size-$w$ object is
  $O(f(w)\cdot(1/\epsilon')\log(1/\epsilon'))$.  Moreover, the worst-case
  cost of an insert or delete is $O((1/\epsilon')w f(1) + f(\Delta))$.
\end{lemma}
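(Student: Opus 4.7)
The plan is to handle the amortized and worst-case bounds separately, since they rest on different parts of the algorithm.

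For the amortized bound, I would argue that the deamortized algorithm does essentially the same work per flush as the amortized one analyzed in \lemref{makespan-cost}. The flush itself still moves each preexisting object of the flushed size classes a constant number of times (once when emptying the buffers into the overflow/working region, and a constant number of times to reach the final layout). The new ingredient is the final phase that drains the log: each logged object is moved from the log into a buffer segment, which is just one additional constant-factor reallocation per logged update. Thus, up to constant factors, the total reallocation volume per flush is the same as before, and the per-object amortized charge of $O(f(w)\cdot(1/\epsilon')\lg(1/\epsilon'))$ derived in \lemref{makespan-cost} continues to hold. I would state this and defer to the earlier analysis.

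For the worst-case bound, the key observation is that on a size-$w$ insert or delete, the algorithm does the following amount of work: it processes at most $(4/\epsilon')w + \Delta$ volume of the ongoing flush, and it performs the actual update itself (appending to a buffer or to the log, costing $f(w)$). I would break the $(4/\epsilon')w + \Delta$ volume into two parts: all but the final boundary object (at most $(4/\epsilon')w$ volume) and the single boundary object that pushes the running total over the threshold (size at most $\Delta$). For the first part, I exploit subadditivity of $f$: since $f$ is subadditive and monotone, $f(x)/x$ is non-increasing, so every object of size $\geq 1$ satisfies $f(x) \leq x\cdot f(1)$. Summed over all objects in this portion, the cost is at most $(4/\epsilon')w\cdot f(1)$. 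For the final boundary object, its individual reallocation cost is at most $f(\Delta)$. Finally $f(w) \leq w\,f(1)$, which is absorbed into the first term. Summing these yields $O((1/\epsilon')w\,f(1) + f(\Delta))$.

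The main (mild) obstacle is being careful about the boundary object in the volume budget: if I simply bounded the moved volume by $(4/\epsilon')w + \Delta$ and applied $f(\text{volume}) \leq \text{volume}\cdot f(1)$ globally, I would get $(1/\epsilon')w f(1) + \Delta f(1)$, which is weaker than $(1/\epsilon')w f(1) + f(\Delta)$ since $f(\Delta) \leq \Delta f(1)$. Separating the boundary object and charging it $f(\Delta)$ directly, rather than via the unit-cost bound, is what gives the stated worst-case guarantee. Everything else is a direct application of subadditivity and the rate at which the algorithm drains the flush.
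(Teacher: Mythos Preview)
Your proposal is correct and matches the paper's proof in both structure and detail. The paper handles the worst-case bound the same way—noting that at most $(4/\epsilon')w$ volume is moved plus one final object of size at most $\Delta$, and that the worst case for the first part is when all objects are size~1—and handles the amortized bound by observing that the flush work is unchanged while the log adds only one extra move per object, so \lemref{makespan-cost} carries over.
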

\begin{proof}
  The worst-case upper bound follows by construction.  The algorithm
  only reallocates $(4/\epsilon')w$ volume of objects per update, plus
  up to one last object to exceed this volume.  In the worst case,
  these objects are size-1 objects except the last which is
  size-$\Delta$, for a total cost of $O((1/\epsilon)w f(1) +
  f(\Delta))$.

  As for the amortized bound, adding a larger buffer to the data
  structure only improves the amortized cost.  Specifically, the proof of
  \lemref{makespan-cost} relied on \emph{lower bounding} the volume of
  buffered objects, so the same analysis applies once an object is
  placed in a buffer.  The deamortized data structure has an
  additional reallocation for each object that is placed in the log,
  moving it from the log to a buffer, but this only occurs once per
  object.
\end{proof}

\subheading{Lower bound on worst-case cost}
Note that $\Omega(f(\Delta))$ is a lower bound on the
worst-case reallocation cost when maintaining a $(1+\epsilon) V$
footprint size, as exhibited by the following lemma.  It is not
obvious whether $\Omega(w f(1))$ is also a lower bound on the
worst-case reallocation cost of any algorithm.  If so, then our
deamortized structure's worst-case cost would be asymptotically
optimal for constant~$\epsilon$.  Although not a general lower bound,
an $\Omega(w f(1))$ worst-case cost appears to be unavoidable for any
algorithm that stores ``enough'' small objects after large
objects. (And storing objects out of order in this way seems
  crucial for obtaining a \emph{cost-oblivious} algorithm.)
Informally, deleting a size-$w$ object leaves a large hole in the
array.  To maintain the desired footprint, this hole must be filled by
later objects. If all later objects are small (size-1), then a
size-$w$ delete may cause $\Omega(w)$ size-1 objects to move.

\begin{lemma}
  For any reallocation algorithm that maintains a footprint of
  $(1+1/2)V$ and subadditive cost function~$f$, there exists an update
  sequence such that at least one update has a reallocation cost of
  $\Omega(f(\Delta))$.  This lower bound applies even if the
  reallocation algorithm knows~$f$ and the full update sequence.
\end{lemma}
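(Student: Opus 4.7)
The plan is to construct a short adversarial update sequence---three phases: insert $\Delta$ size-$1$ objects $B_1,\ldots,B_\Delta$, insert one size-$\Delta$ object $A$, then delete all of $B_1,\ldots,B_\Delta$ one at a time---and argue that any $(3/2)V$-footprint-respecting algorithm must incur reallocation cost $\Omega(f(\Delta))$ on at least one of these updates, even with full knowledge of the sequence and of $f$.

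First I would record the two geometric invariants forced by the footprint bound. After the first phase, the $\Delta$ size-$1$ objects occupy $\Delta$ of the cells in $[0,(3/2)\Delta)$, leaving at most $\Delta/2$ free cells in that range; in particular, there is no contiguous run of $\Delta$ empty cells available there. After the third phase, only $A$ is active with volume $\Delta$, so its final position $a^*$ must satisfy $[a^*,a^*+\Delta)\subseteq[0,(3/2)\Delta)$, which forces $a^*\leq \Delta/2$.

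The main step is a clean dichotomy on whether $A$ is ever reallocated during the deletion phase. If $A$ moves in any single update of that phase, then that update incurs reallocation cost $f(\Delta)$ and we are done. Otherwise, $A$'s placement position during the insertion in the second phase must already equal $a^*$. Since $[a^*,a^*+\Delta)\subseteq[0,(3/2)\Delta)$ contains at most $\Delta/2$ free cells, the $\Delta$-cell window $[a^*,a^*+\Delta)$ must contain at least $\Delta/2$ of the size-$1$ objects $B_i$, every one of which has to be relocated out of the way during the single update that places $A$. That update therefore pays at least $(\Delta/2)f(1)$ in reallocation cost, and by subadditivity $\Delta f(1)\geq f(\Delta)$, so the cost is at least $f(\Delta)/2=\Omega(f(\Delta))$.

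The part that requires the most care, and that I expect to be the main obstacle, is verifying that no hybrid strategy can escape this dichotomy---for instance, juggling $A$ together with piecemeal shuffling of the small objects spread across many updates. The key observation is that the reallocation cost of a single update is a sum over the objects whose positions differ between the pre- and post-update allocations, so any update in which $A$'s position changes at all already charges the full $f(\Delta)$, while the ``$A$ never moves'' alternative concentrates all of the $\Omega(\Delta f(1))$ forced evictions onto the single phase-two update that inserts $A$. Neither side of the dichotomy can be amortized away, so some individual update pays $\Omega(f(\Delta))$.
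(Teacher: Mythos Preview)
Your proof is correct and rests on the same two ingredients as the paper's---a dichotomy on whether the large object ever moves, and the subadditivity inequality $\Delta f(1)\ge f(\Delta)$ to convert many small reallocations into an $\Omega(f(\Delta))$ bound. The construction, however, is a mirror image of the paper's. The paper inserts the size-$\Delta$ object \emph{first}, then the $\Delta$ unit objects, then deletes the large object; its dichotomy is on whether the large object moves during the unit insertions, and in the ``never moves'' branch the expensive update is the deletion of the large object, which forces $\Omega(\Delta)$ unit objects sitting after it to move. You instead insert the unit objects first, then the large object, then delete the unit objects; your dichotomy is on whether the large object moves during the deletions, and in the ``never moves'' branch the expensive update is the \emph{insertion} of the large object, which must evict $\Omega(\Delta)$ unit objects from its landing interval. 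Both arguments are equally short; the paper's has a slightly shorter sequence ($\Delta+2$ updates versus your $2\Delta+1$), while yours concentrates the heavy work on an insertion rather than a deletion, which is arguably a cleaner illustration that the lower bound is not an artifact of deletes.
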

\begin{proof}
  Here is the sequence. First insert one size-$\Delta$ object.  Then
  insert $\Delta$ size-1 objects.  Then delete the size-$\Delta$
  object.  There are two cases to show the lower bound.\\
  Case 1: some small-object insertion causes the large object to be
  reallocated.  Then that insert has a reallocation cost of
  at least $f(\Delta)$.\\
  Case 2: the large object does not get reallocated.  Then the large
  object must end before position $(3/2)\Delta$ to achieve the
  footprint bound, and hence there must be at least $\Delta/2$ small
  objects appearing after the large one.  When deleting the large
  object, those small objects must move in order to restore the
  $(3/2)\Delta$ footprint bound.  Hence the cost of deleting the large
  objects is $\Omega(\Delta \cdot f(1)) \subset \Omega(f(\Delta))$ for
  subadditive~$f$.
\end{proof}

\iffalse 
\section{Conclusions}
\seclabel{sec:conc}

We have presented progress in the area of cost-oblivious reallocation.
%At this point, our work had focused on developing theoretical foundations
%and data structures. We are optimistic that actual implementations will
%turn out to be useful, e.g., in the context of TokuDB; at this point, this is left for future work. 
%On the theoretical side, 
There is a variety of possible extensions to this concept.
One such direction is to consider the sum of allocation costs;
we address this in a related followup paper~\cite{bff+-corsct-15}.

\fi

%%% Local Variables:
%%% mode: latex
%%% TeX-master: "TALG.tex"
%%% End:

{%\small
\bibliographystyle{ACM-Reference-Format-Journals}
\bibliography{./reallocation}
}

\end{document}